\DeclareSIUnit\linesofcode{loc}
\newcommand*{\github}{\textsf{github}\xspace}
\algnewcommand\algorithmicinput{\textbf{Input:}}
\algnewcommand\INPUT{\item[\algorithmicinput]}
\algnewcommand\algorithmicoutput{\textbf{Output:}}
\algnewcommand\Output{\item[\algorithmicoutput]}
\algrenewcommand\algorithmicindent{1ex}
\def\anno#1{{\ooalign{\hfil\raise.07ex\hbox{\small{\rm
          \textcolor{red}{{\tiny #1}}}}\hfil%
        \crcr{\scriptsize \textcolor{blue}{\mathhexbox20D}}}}}
\definecolor{ckeyword}{HTML}{7F0055}
\definecolor{ccomment}{HTML}{3F7F5F}
\definecolor{cnumber}{HTML}{2A0099}
\lstdefinelanguage{Solidity} {
  keywords={typeof, modifier, function, public, returns, external,
    unchecked, return, for, 
  contract, new, true, false, private, view, pure, memory, catch, function, null, throw, catch, switch, var, if, in, while, do, else, case, break},
  ndkeywords={bool, struct, address, uint256, uint64, mapping, bytes32, string},
  identifierstyle=\color{black},
  sensitive=false,
  comment=[l]{//},
  morecomment=[s]{/*}{*/},
  commentstyle=\color{ccomment}\ttfamily,
  string=[b]",
  showstringspaces=false,
  morestring=[b]',
  showspaces=false,
  showtabs=false,  breaklines=true,
  morekeywords={function, contract, returns},
  breakatwhitespace=true,
  lineskip=-0.6pt,
  basewidth={0.54em, 0.4em},
  basicstyle=\scriptsize\sffamily,
  keywordstyle={\color{ckeyword}\scriptsize\bfseries},
  ndkeywordstyle={\color{black}\scriptsize\bfseries},
  commentstyle={\color{ccomment}\itshape},
  stringstyle={\color{pgreen}},
  numbersep=3pt,
  numberstyle={\tiny\color{cnumber}\ttfamily},
  moredelim=[il][\textcolor{pgrey}]{$$},
  moredelim=[is][\textcolor{pgrey}]{\%\%}{\%\%},
}
\lstdefinelanguage{evm}{
  stringstyle=\scriptsize\ttfamily
  basicstyle=\scriptsize\ttfamily
  keywordstyle=\scriptsize\ttfamily \mdseries
  morekeywords={STOP, ADD, MUL, SUB, DIV, SDIV, MOD, SMOD, ADDMOD, MULMOD, EXP, SIGNEXTEND,
  LT, GT, SLT, SGT, EQ, ISZERO, AND, OR, XOR, NOT, BYTE, SHL, SHR, SAR, SHA3,
  ADDRESS, BALANCE, ORIGIN, CALLER, CALLVALUE, CALLDATALOAD, CALLDATASIZE,
  CALLDATACOPY, CODESIZE, CODECOPY, GASPRICE, EXTCODESIZE, EXTCODECOPY,
  RETURNDATASIZE, RETURNDATACOPY, EXTCODEHASH,
  BLOCKHASH, COINBASE, TIMESTAMP, NUMBER, DIFFICULTY, GASLIMIT,
  POP, MLOAD, MSTORE, MSTORE8, SLOAD, SSTORE, JUMP, JUMPI, PC, MSIZE, GAS, JUMPDEST,
  PUSH, SWAP, SWAP1,SWAP2,SWAP3,SWAP4,SWAP5,SWAP6,SWAP7,SWAP8,SWAP9,SWAP10,SWAP11,
  SWAP12,SWAP13, SWAP14,SWAP15,SWAP16, DUP, DUP1,DUP2,DUP3,DUP4,DUP5,DUP6,DUP7,
  DUP8,DUP9,DUP10,DUP11,DUP12,DUP13, DUP14,DUP15,DUP16,
  LOG0, LOG1, LOG2, LOG3, LOG4, JUMPTO, JUMPIF, JUMPV, JUMPSUB, JUMPSUBV, BEGINSUB, BEGINDATA,
  RETURNSUB, PUTLOCAL, GETLOCAL, CREATE, CALL, CALLCODE, RETURN, DELEGATECALL, CREATE2,
  STATICCALL, REVERT, INVALID, SELFDESTRUCT, LOG, NOP},
  %sensitive=true,
  alsoletter={0,1,2,3,4,5,6,7,8,9,a,b,c,d,e,f},
  basicstyle=\scriptsize\ttfamily,
  comment=[l]{//},
  commentstyle={\itshape},
  numbers=none
}
\newcommand{\sscriptcode}[1]{\lstinline[language=Solidity,basicstyle=\scriptsize\ttfamily]{#1}}
\newcommand{\scriptcode}[1]{\sscriptcode{#1}}
\newcommand{\scode}[1]{\lstinline[language=Solidity,basicstyle=\footnotesize\ttfamily]{#1}}
\newcommand{\code}[1]{\scode{#1}}
\newcommand{\mcode}[1]{\text{\small{\texttt{#1}}}}
\newcommand{\smcode}[1]{\text{\scriptsize{\texttt{#1}}}}
\newcommand{\evmcode}[1]{\text{\texttt{\tiny{#1}}}}
\newcommand{\nodetext}[1]{{\text{\scriptsize{#1}}}}
\newcommand{\lstcode}[1]{{\small\texttt{#1}}}
\newcommand{\baseref}[0]{\ensuremath{\textit{baseref}}\xspace}
\newcommand{\lst}[1]{\mbox{\lstinline!#1!}}
\newcommand{\lub}{\sqcup}
\newcommand{\EVM}{EVM\xspace}
\newcommand{\evm}{EVM\xspace}
\newcommand{\dom}[0]{\ensuremath{\mathit{dom}}}
\newcommand{\mem}[1]{{\small $ mem\langle\text{#1}\rangle$}}
\newcommand{\slot}[1]{\textit{s$_{#1}$}}
\newcommand{\opened}{\mathcal{S}}
\newcommand{\closed}{\mathcal{S}_{all}}
\newcommand{\closing}{\mathcal{K}}
\newcommand{\getslot}{\textit{get\_slots}}
\newcommand{\allslots}{\mathcal{A}}
\newcommand{\stackset}{\mathcal{T}}
\newcommand{\ppmload}{\mathcal{P}_{R}}
\newcommand{\ppmstore}{\mathcal{P}_{W}}
\newcommand{\writeset}{{\small \mathcal{W}}}
\newcommand{\readset}{{\small \mathcal{R}}}
\newcommand{\writingleaks}{{\small \mathcal{N}}}
\newcommand{\reachable}{reachable}
\newcommand{\existsread}{exists\_read}
\newcommand{\set}[1]{\ensuremath{\{#1\}}}
\newcommand{\memread}{\ensuremath{\textit{mr}}}
\newcommand{\memwrite}{\ensuremath{\textit{mw}}}
\newcommand*{\numAddresses}{4,891\xspace}
\newcommand*{\numTimeouts}{626\xspace}
\newcommand*{\numContracts}{19,199\xspace}
\newcommand*{\totalTime}{12,803s\xspace}
\newcommand*{\needlessAccess}{6,242\xspace}
\newcommand*{\numAbsSlots}{679,517\xspace}
\newcommand{\secbeg}{\vspace*{0cm}}%-.15
\def\orcidID#1{\smash{\href{http://orcid.org/#1}{\protect\raisebox{-1.25pt}{\protect\includegraphics{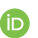}}}}}
\begin{document}

\title{Inferring Needless Write Memory Accesses\\ on Ethereum
  Bytecode (Extended Version)\thanks{This work was funded partially by the Spanish MCI,
    AEI and FEDER (EU) projects PID2021-122830OB-C41 and
    PID2021-122830OA-C44}}

% Unnecessary - Needless -  irrelevant - superfluous - redundant 

%
\titlerunning{Inferring Needless Write Memory Accesses on Ethereum
  Bytecode (Ext. V.)}
% If the paper title is too long for the running head, you can set
% an abbreviated paper title here
%

% \author{}
% \authorrunning{}
% \institute{}
%
\author{ Elvira Albert$^{1}$\orcidID{0000-0003-0048-0705}
  \and Jes\'us Correas$^{1}$\orcidID{0000-0002-3219-0799}
   \and Pablo Gordillo$^{1}$(\Letter)\orcidID{0000-0001-6189-4667} \and \\
  Guillermo Rom\'an-D\'iez$^{2}$\orcidID{0000-0002-5427-8855}
  \and Albert Rubio$^{1}$\orcidID{0000-0002-0501-9830}
}
\authorrunning{E. Albert et al.}
% %
\institute{
 Complutense University of Madrid,  Spain \\  
 \and Universidad Polit\'ecnica de Madrid, Spain\\
\email{pabgordi@ucm.es}}

\maketitle              % typeset the header of the contribution

\secbeg\secbeg\secbeg\secbeg
\begin{abstract}
  Efficiency is a fundamental property of any type of program, but it
  is even more so in the context of the programs executing on the
  blockchain (known as \emph{smart contracts}). This is because
  optimizing smart contracts has direct consequences on reducing the
  costs of deploying and executing the contracts, as there are fees to
  pay related to their bytes-size and to their resource consumption
  (called \emph{gas}). Optimizing memory usage is considered a
  challenging problem that, among other things, requires a precise
  inference of the memory locations being accessed.  This is also the
  case for the Ethereum Virtual Machine (EVM) bytecode generated by
  the most-widely used compiler, \texttt{solc}, whose rather
  unconventional and low-level memory usage challenges automated
  reasoning.  This paper presents a static analysis, developed at the
  level of the EVM bytecode generated by \texttt{solc}, that infers
  write memory accesses that are needless and thus can be safely
  removed. The application of our implementation on more than 19,000
  real smart contracts has detected about 6,200 needless write
  accesses in less than 4 hours. Interestingly, many of these writes
  were involved in memory usage patterns
  generated by \texttt{solc} that can be greatly optimized by removing entire blocks of bytecodes.  To the
  best of our knowledge, existing optimization tools cannot infer such
  needless write accesses, and hence cannot detect these inefficiencies that affect both the deployment and the execution costs of Ethereum smart contracts.
  %
  % \\ Data Availability Statement: An artifact will be submitted to the
  % AEC under EasyChair id XX.

  %and achieve the optimizations that our analysis
  %finds.

  % As our experimental
  % evaluation shows, eliminating such unnecessary memory accesses will
  % have a direct impact on reducing the gas consumption of smart
  % contract's execution and on its bytecode size: the analysis of XX
  % contracts taken from XXX detect XX redundant accesses on only XX
  % sec.
\end{abstract}
%

% ;; -*- coding: iso-latin-1; TeX-PDF-mode: t; TeX-master: "main" -*-%

\secbeg
\secbeg
\secbeg\secbeg
\secbeg
\secbeg
\section{Introduction}\label{sec:introduction}
\secbeg

\paragraph{EVM and memory model.} Ethereum~\cite{yellow} is considered
the world-leading programmable blockchain today. It provides a virtual
machine, named EVM (Ethereum Virtual Machine)~\cite{evmBook}, to
execute the programs that run on the blockchain. Such programs, known
as Ethereum ``smart contracts'', can be written in high-level
programming languages such as Solidity~\cite{solidity-doc},
Vyper~\cite{vyper}, Serpent~\cite{serpent} or Bamboo~\cite{bamboo} and
they are then compiled to EVM bytecode. The EVM bytecode is the code
finally deployed in the blockchain, and has become a uniform format to
develop analysis and optimization tools.  The memory model of EVM
programs has been described in previous work
\cite{yellow,LagouvardosGTS20,HajduJ20,TsankovDDGBV18}. Mainly, there
are three regions in which data can be stored and accessed: (1) The
EVM is a stack-based virtual machine, meaning that most instructions
perform computations using the topmost elements in a machine
\emph{stack}. This memory region can only hold a limited amount of
values, up to 1024 256-bit words.
  % The EVM uses a word size of 256 bits, extremely useful
  % for hash-related computations.
%
(2) EVM programs store data persistently using a memory region named
\emph{storage} that consists of a mapping of 256-bit addresses to
256-bit words and whose contents persist between external function
calls.
  %
 %  Another memory region used for performing computations is the
  % \emph{stack}.
(3) The third memory region is a local volatile memory area that we
will refer to as EVM \emph{memory}, and which is the focus of our
work. This memory area behaves as a simple word-addressed array of
bytes that can be accessed by byte or as a one-word group.  The EVM
memory can be used to allocate dynamic local data (such as arrays or
structs) and also for specific EVM bytecode instructions which have
been designed to require some lengthy operands to be stored in local
memory. This is the case of the instructions for computing
cryptographic hashes, or for passing arguments to and returning data
from external function calls.
Compilers use the stack and volatile memory regions in different
ways. The most-used Solidity compiler \texttt{solc} generates EVM code
that uses the stack for storing value-type local variables, as well as
intermediate values for complex computations and jump addresses,
whereas reference-type local variables such as array types and
user-defined struct types are located in memory.
The focus of our work is on detecting needless write memory accesses
on the code generated by %the standard Solidity compiler
\texttt{solc}. Nevertheless, as the analysis works at EVM level, it
could be easily adapted to EVM code generated by any other compiler.

\paragraph{Optimization.}  Optimization of Ethereum smart contracts is
a hot research topic, see
e.g.~\cite{AlbertGHR22,DBLP:conf/icse/ChenLZCLLZ18,ChenFLZLLXCZ20,ebso,2020_brandstatter_et_al,2020_schett_et_al,GaoSSLSB21}
and their references. This is because the reduction of their costs is
relevant for three reasons: (1) \emph{Deployment fees}. When the
contract is deployed on the blockchain, the owner pays a fee related
to the size in bytes of the bytecode. Hence, a clear optimization
criterion is the bytes-size of the program. The Solidity compiler
\texttt{solc}~\cite{solidity-doc} has as optimization target such
bytes-size reduction.  (2) \emph{Gas-metered execution}. There is a
fee to be paid by each client to execute a transaction in the
blockchain. This fee is a fixed amount per transaction plus the cost
of executing all bytecode instructions within the function being
invoked within the transaction. This cost is measured in ``gas''
(which is then priced in the corresponding cryptocurrency) and this is
why the execution is said to be gas-metered.  The EVM specification
(\hspace{1sp}\cite{yellow} and more recent updates) provides a precise
gas consumption for each bytecode instruction in the language. The
goal of most EVM bytecode optimization
tools~\cite{AlbertGHR22,DBLP:conf/icse/ChenLZCLLZ18,ChenFLZLLXCZ20,ebso,2020_brandstatter_et_al,GaoSSLSB21}
is to reduce such gas consumption, as this will revert on reducing the
price of all transactions on the smart contract.  (3) \emph{Enlarging
  Ethereum's capability}. Due to the huge volume of transactions that
are being demanded, there is a huge interest in enlarging the
capability of the Ethereum network to increase the number of
transactions that can be handled. Optimization of EVM bytecode in
general --and of its memory usage in particular-- is an important step
contributing into this direction.

\paragraph{Challenges and contributions.} Optimizing memory usage is
considered a challenging problem that requires a precise inference of
the memory locations being accessed, and that usually varies according
to the memory model of the language being analyzed, and to the compiler
that generates the code to be executed. In the case of Ethereum smart
contracts generated by the \texttt{solc} compiler, %
the memory model is rather unconventional and its low-level memory
usage patterns challenge automated reasoning. On one hand, instead of
having an instruction to allocate memory, the allocation is performed
by a sequence of instructions that use the value stored at address
$0x40$ as the \emph{free memory pointer}, i.e., a pointer to the first memory
address available for allocating new memory. In the general case, the
memory is structured as a sequence of \emph{slots}: a slot is composed
of several consecutive memory locations that are accessed in the
bytecode from the same initial memory location plus a corresponding
offset.  A slot might just hold a data structure created in the smart
contract but also, when nested data structures are used, from one slot
we can find pointers to other memory slots for the nested components.
%  \change{, possibly containing pointers to other memory slots
 %   in the case of nested data structures\fixme{ByJC: creo que es
  %    relevante indicar esto desde el ppio, es lo que nos distingue de
   %   los griegos}}%
%.
Finally, there are other type of %low-level accesses and
\emph{transient} slots that hold temporary data and that need to be
captured by a precise memory analysis as well. These features pose the
main challenges to infer needless write accesses and, to handle them
accurately, we make the following main contributions:
(1) we present a \emph{slot analysis} to (over-)approximate the slots
created along the execution and the program points at which they are
allocated;
(2) we then introduce a \emph{slot usage analysis} which infers the
accesses to the different slots from the bytecode instructions;
(3) we finally infer \emph{needless write accesses}, i.e., program
points where the memory is written but is never read by any subsequent
instruction of the program; and
(4) we implement the approach and perform a thorough experimental
evaluation on real smart contracts detecting needless write accesses
which belong to highly optimizable memory usage patterns generated by
\texttt{solc}.
Finally, it is worth mentioning that the applications of the memory
analysis (points 1 and 2) go beyond the detection of needless write
accesses: a precise model of the EVM memory is crucial to enhance the
accuracy of any posterior analysis (see, e.g., \cite{LagouvardosGTS20}
for other concrete applications of a memory analysis).
  
  % Additionally, after a manual inspection of the experimental results,
  % we have detected that some of these needless write accesses belong
  % to code patterns generated by the compiler that can be completely
  % removed.

% For the purpose of this article, we classify EVM optimization tools in
% two types depending on the ``fragment'' of the program to which they
% are applied:

% \begin{enumerate}
% \item[i.] \emph{Inter-block optimizers}. These optimizations can
%   involve several blocks of the control flow graph involving loops,
%   branching instructions, etc. Techniques that require inter-block
%   optimizations are typically dead code elimination, constant
%   propagation across multiple paths, etc. The YUL
%   optimizer~\cite{yul-doc,solc-optimizer} performs such type of
%   inter-block optimizations.

% \item[ii.] \emph{Intra-block optimizers}. In contrast to (i), in this
%   type of optimizers, the optimizations are applied only on loop-free
%   and branch-free code at the level of individual blocks of the
%   control flow graph. A prominent example are super-optimization
%   tools~\cite{ebso,AlbertGHR22,AlbertGHRS22} that focus on searching
%   alternative sequences of instructions that are semantically
%   equivalent and more efficient using advanced tools such as SMT
%   solvers~\cite{z3,Barcelogic,mathsat}.
% \end{enumerate}

% ;; -*- coding: iso-latin-1; TeX-PDF-mode: t; TeX-master: "main" -*-%

\secbeg
\section{Memory  Layout and Motivating Examples}\label{sec:memory-stor-layo}

%\subsection{EVM Memory layout}

\paragraph{Memory Opcodes. } The EVM instruction set contains
the usual instructions to access memory: the most basic
instructions that operate on memory are \code{MLOAD} and
\code{MSTORE}, which load and store a 32-byte word from
memory, respectively.\footnote{Although the local memory is
  byte addressable with instruction \code{MSTORE8}, to
  keep the description simpler, we only consider the general
  case of word-addressable \code{MSTORE}.}
The \texttt{solc} compiler generates code to handle memory with a
cumulative model
%\footnote{The Solidity documentation warns that this
%  could change in the future, although at the time of this writing,
%  with compiler version 0.8.17, it remains cumulative.} 
in which
memory is allocated along the execution of the program and is never
released.  In contrast to other bytecode virtual machines, like the
\textit{Java Virtual Machine}, the EVM does not have a particular
instruction to allocate memory. The allocation is performed by a
sequence of instructions that use the value stored at address $0x40$
as the \emph{free memory pointer}, i.e., a pointer to the first memory
address available for allocating new memory.  In what follows, we use
\mem{x} to refer to the content stored in memory at location x.

\paragraph{Memory Slots.} 
In the general case, memory is structured as a sequence of
\textit{slots}. A slot is composed of consecutive memory locations
that are accessed by using its initial memory location, which we call
the \textit{base reference} (\baseref for short) of the slot, plus the
corresponding offset needed to access a specific location within the
slot.  Slots usually store (part of) some data structure created in
the Solidity program (e.g., an array or a struct) and whose length can
be known.

% ;; -*- coding: iso-latin-1; TeX-PDF-mode: t; TeX-master: "main" -*-%

%{\small \setlength\belowcaptionskip{-10pt}{
\begin{figure}[t]%\vspace{-1cm}

\begin{minipage}{4.4cm}

\begin{lstlisting}[name=running, numbersep=6pt, language=Solidity]
$\label{running1:st1}$struct TokenOwnership {
$\label{running1:st2}$  address addr;
$\label{running1:st3}$  uint64 startTs;
$\label{running1:st4}$  bool burned;
$\label{running1:st5}$}

contract Running1 {
  //...
  function unpackedOwnership
    (uint256 packed) public
$\slotcc{1}$$\slotcc{2}$    $\label{running1:returns}$ returns (TokenOwnership memory ownership) {
    $\label{running1:pp1}$ownership.addr = ...;
    $\label{running1:pp2}$ownership.startTs = ...;
    $\label{running1:pp3}$ownership.burned = ...;
  }
}
\end{lstlisting}
\end{minipage}
\begin{minipage}{7.7cm}
\begin{lstlisting}[name=running,firstnumber = 17, numbersep=6pt, language=Solidity]
contract Running2 {
  Running1 c;
  mapping(uint256=>uint256) private _packedOwnerships;
  // ...
  function _ownershipAt(uint256 i) private 
$\slotcc{6}$      $\label{ex2:funchead3}$  returns (TokenOwnership memory) {
$\slotcc{7}$    $\label{ex2:externalcall}$return c.unpackedOwnership(_packedOwnerships[i]);
  }
$\label{ex2:funchead1}$  function explicitOwnershipOf(uint256 tokenId)  
$\slotcc{3}$$\label{ex2:funchead2}$        public returns (TokenOwnership memory) {
$\slotcc{4}$  $\label{ex2:varres1}$  TokenOwnership memory ownership;
$\label{ex2:return1}$$\slotcc{5}$    if (...) { return ownership; }
$\slotcc{8}$    $\label{ex2:callres}$ownership = _ownershipAt(tokenId);
    //...
$\slotcc{5}$    $\label{ex2:return2}$return ownership;
  }
}
\end{lstlisting}
\end{minipage}
                      
%   {\begin{lstlisting}[name = aa,language=evm,escapechar=@,basicstyle=\tiny\ttfamily]
% 0x0FD: PUSH1 0x40
% 0x0FF: MLOAD
% @~~~~@//ownership baseref.
% 0x100: DUP1
% 0x101: PUSH1 0x60
% 0x103: ADD
% 0x104: PUSH1 0x40
% 0x106: MSTORE
% @~~~~@//free pointer updated.
% ...  //Write slot contents.
% 0x04B: PUSH1 0x40
% 0x04D: MLOAD
% @~~~~@//Return slot baseref (@\textbf{{\color{red}ret}}@)
% ...  //Read ownership.addr 
% 0x1E4: MSTORE //Write @\textbf{{\color{red}ret}}@.addr
% ... //Read ownership.startTs 
% 0x23E: PUSH1 0X20
% 0x240: DUP3 //@\textbf{{\color{red}ret}}@ baseref
% 0x241: ADD
% \end{lstlisting} }     
% &
% {  \begin{lstlisting}[name= aa,language=evm,basicstyle=\tiny\ttfamily]
% ...
% 0x207: MSTORE //Write @\textbf{{\color{red}ret}}@.startTs
% ...              //Read ownership.burned.
% 0x259: PUSH1 0X40
% 0x25B: DUP6 //@\textbf{{\color{red}ret}}@ baseref
% 0x25C: ADD
% ...
% 0x222: MSTORE //Write @\textbf{{\color{red}ret}}@.burned
% ...
% 0x058: PUSH1 0x40
% 0x05A: MLOAD //@\textbf{{\color{red}ret}}@ slot re-read
% 0x05B: DUP1
% 0x05C: SWAP2
% 0x05D: SUB
% 0x05E: SWAP1
% 0x05F: RETURN //@\textbf{{\color{red}ret}}@ slot returned.
% \end{lstlisting}}
  %%
\secbeg\secbeg\secbeg
\caption{\textsf{Excerpt of smart contract ERC721A.}}
\secbeg\secbeg\secbeg
\label{fig:running1}
\end{figure}

\begin{example}[slots]
\label{ex:solcode}
Fig.~\ref{fig:running1} shows an excerpt of smart contract
ERC721A~\cite{running} which contains two different contracts
\code{Running1} and \code{Running2}.
We have omitted non-relevant instructions such as those that appear at
lines~\ref{running1:pp1}-\ref{running1:pp3}
(L\ref{running1:pp1}-L\ref{running1:pp3} for short).
The contract \code{Running1} to the left of Fig.~\ref{fig:running1}
contains the public function {\code{unpackedOwnership}} that returns a
struct of type \code{TokenOwnership} defined at
L\ref{running1:st1}-L\ref{running1:st5}.
The contract \code{Running2}, shown to the right, contains the public
function \code{explicitOwnershipOf} that returns, depending on a
non-relevant condition, an empty struct of type \code{TokenOwnership}
(L\ref{ex2:return1}) or the \code{TokenOwnership} received from a call
to function \code{unpackedOwnership} of contract \code{Running1}
(L\ref{ex2:externalcall}), which is done in the private function
\code{\_ownershipAt}.  The execution of function
{\code{unpackedOwnership}} in \code{Running1} allocates two different
memory slots at L\ref{running1:returns}: \slot{1}, for the returned
variable \lstcode{ownership}, and \slot{2}, which is used for actually
returning from the function the contents of \lstcode{ownership}:
\vspace{-0.2cm}
  \begin{center}
   % ;; -*- coding: iso-latin-1; TeX-PDF-mode: t; TeX-master: "main" -*-%

\begin{tikzpicture}[->, start chain, thick]
  \tikzset{
      onelement/.style={
          draw, rectangle,minimum size=14pt, 
          inner sep=0pt, text=black, node distance=-1pt
      }
  }

  \tikzset{
    slot/.style={
      draw, rectangle,minimum height=14pt, minimum width=60pt, 
      inner sep=0pt, text=black, node distance=-1pt
    }
}

  \node[onelement] (m0)  [label=below:$\evmcode{0x00}$] {};
  \node[onelement] (m20) [right=of m0, label=below:$\evmcode{0x20}$] {};
  \node[onelement] (m40) [right=of m20, label=below:$\evmcode{0x40}$] {};
  \node[onelement] (m60) [right=of m40, label=below:$\evmcode{0x60}$] {};
  \node[slot] (m80) [right=of m60, label=below:$\evmcode{bref=0x80}$, label=above:$\slot{1} \nodetext{(L\ref{running1:returns})}$] {\scriptcode{ownership}};
  \node[slot] (slot2) [right=of m80, label=below:$\evmcode{bref=0x80+0x60}$, label=above:$\slot{2} \nodetext{(L\ref{running1:returns})}$] {\scriptcode{return}};
  
  \draw[->,dashed] (m40.north) -- +(0,.3) -| (m80.north west);
  \draw[->] (m40.north) -- +(0,.5) -| (slot2.north west);

\end{tikzpicture}

  \end{center}
\vspace{-0.35cm}
  \noindent
  The function \code{explicitOwnershipOf} in \code{Running2}
  makes a more intensive use of the memory which 
  can be seen in this graphical representation:
  \vspace{-0.2cm}
  \begin{center}
    % ;; -*- coding: iso-latin-1; TeX-PDF-mode: t; TeX-master: "main" -*-%

  \begin{tikzpicture}[->, start chain, thick]
    \tikzset{
      onelement/.style={
          draw, rectangle,minimum height=14pt, minimum width=9pt,
          inner sep=0pt, text=black, node distance=-1pt
      }
  }
  
    \tikzset{
      slot/.style={
        draw, rectangle,minimum height=14pt, minimum width=40pt, 
        inner sep=0pt, text=black, node distance=-1pt
      }
    }

    \tikzset{
      slotsep/.style={
        draw, rectangle,minimum height=14pt, minimum width=40pt, 
        inner sep=0pt, text=black, node distance=3pt and 20pt
      }
    }
    
    \node[onelement] (m0)  [] {};
    \node[onelement] (m20) [right=of m0, label=above:$~~~\evmcode{0x00-0x60}$] {};
    \node[onelement] (m40) [right=of m20] {};
    \node[onelement] (m60) [right=of m40] {};
      \node[slot] (slot3) [right=of m60, label=above:$\slot{3} \nodetext{(L\ref{ex2:funchead2})}$] 
            {\nodetext{returns} };
    
            \node[slot] (slot4) [right=of slot3, label=above:$\slot{4} \nodetext{(L\ref{ex2:varres1})}$] 
            {\scriptcode{ownership}};
    
    %\draw[*->,dashed] (m40.center) -- +(0,.6) -| (m80.north west);
    %\draw[*->] (m40.center) -- +(0,.8) -| (slot2.north west);
  
    \node[slotsep] (slot6) [above right=of slot4,label=above:$\slot{6} \nodetext{(L\ref{ex2:funchead3})}$] 
            {\nodetext{returns}};

    \node[slot] (slot7) [right=of slot6,label=above:$\slot{7} \nodetext{(L\ref{ex2:externalcall})}$] 
            {\scriptcode{return} };

    \node[slot] (slot8) [right=of slot7,label=above:$\slot{8} \nodetext{(L\ref{ex2:callres})}$] 
            {\nodetext{call res.}};

    \node[slotsep] (slot5) [below right=of slot8,label=above:$\slot{5} \nodetext{(L\ref{ex2:return2}-L\ref{ex2:return1})}$] 
            {\scriptcode{return}};

    \draw[->] (slot4.east) -- (slot5.west);
    \draw[->] (slot8.east) -- (slot5.north west);
    \draw[->] (slot4.east) -- (slot6.south west);

  \end{tikzpicture}

  \end{center}
  \vspace{-0.2cm}
  \noindent
  The execution of this function might create up to six different
  slots. At L\ref{ex2:funchead2} and L\ref{ex2:varres1}, it creates
  two slots, one for the struct declared in the \mcode{returns} part
  of the function header (\slot{3}) and one for the local variable
  \code{ownership} (\slot{4}). Depending on the evaluation of the
  condition in the \mcode{if} sentence, it might create the slots
  needed to perform the call to \code{\_ownershipAt} and,
  consequently, the external call to
  \code{Running1.unpackedOwnership}. The invocation to the private
  function involves three slots: one for the struct declared in the
  \mcode{returns} part of \code{\_ownershipAt} in L\ref{ex2:callres}
  (\slot{6}), one slot to manage the external call data in
  L\ref{ex2:externalcall} (\slot{7}), and one slot for storing the
  results of the private function \code{\_ownershipAt} in
  L\ref{ex2:callres} (\slot{8}). Finally, a new slot (\slot{5}) is
  created for returning the results of
  \code{explicitOwnershipOf}. This new slot might contain the contents
  of \slot{4} or \slot{8}, depending on the \mcode{if} evaluation.
\end{example}

When an amount of memory $t$ is to be allocated, the slot reservation
is made by reading and incrementing the free memory pointer
(\mem{0x40}) $t$ positions. From this update on, the \emph{base
  reference} to the slot just allocated is used,
and subsequent accesses to the slot are performed by means of this
baseref, possibly incremented by an offset.

\begin{example}[memory slot reservation]
\label{ex:slotres}
The following excerpt of \evm code allocates a slot of type
\code{TokenOwnership}. The EVM bytecode performs three steps: 

\noindent
\begin{minipage}{0.57\textwidth}
% Note that. The \code{DUP1} operation is made to leave the memory
% location of the new slot on top of the stack after the reservation. 
  (i) load the current value of the free memory pointer \mem{0x40} that will be
  used as the \baseref of the new slot; (ii) compute
  the new free memory address by adding $t$ to the \baseref; and (iii), store the
  new free memory pointer in \mem{0x40}. Additionally, in the same block of the
  CFG, the slot reservation is followed by the slot initialization at \code{0x19A},
  \code{0x1AB} and \code{0x1B4}. 
\end{minipage}
\begin{minipage}{0.42\textwidth}
  {\begin{lstlisting}[language=evm,escapechar=@,basicstyle=\scriptsize\ttfamily]
  @{\tiny0x175:\hspace{-0.1cm}}@ JUMPDEST
  @{\tiny0x176:\hspace{-0.1cm}}@ PUSH1 0x40
  @{\tiny0x178:\hspace{-0.1cm}}@ MLOAD    // (i) baseref
      DUP1
      PUSH1 0x60 // Sizeof "t"
      ADD  // (ii) baseref+0x60
  @{\tiny0x17D:\hspace{-0.1cm}}@ PUSH1 0x40
  @{\tiny0x17F:\hspace{-0.1cm}}@ MSTORE   // (iii) @\vspace{-0.12cm}@
      $\dots$
  @{\tiny0x19A:\hspace{-0.1cm}}@ MSTORE   // baseref+0x00 @\vspace{-0.12cm}@ 
      $\dots$
  @{\tiny0x1AB:\hspace{-0.1cm}}@ MSTORE   // baseref+0x20 @\vspace{-0.12cm}@
      $\dots$
  @{\tiny0x1B4:\hspace{-0.1cm}}@ MSTORE   // baseref+0x40 @\vspace{-0.12cm}@

\end{lstlisting}}
\end{minipage}

\secbeg

\end{example}

Solidity reference type values such as arrays, struct typed variables
and strings are stored in memory using this general pattern, with some
minor differences. However, there are some cases in which the steps
detailed above vary %, generally in low-level accesses to memory,
 and the size of the slot is not
 known in advance, and thus the free memory pointer cannot be
   updated at this point. 
  For instance, when data is returned
by an external call, its length is unknown beforehand and hence the
free memory pointer is updated only after the memory pointed to 
is written. In other cases, the free memory is used as a temporary
region with a short lifetime, as in the case of parameter passing to
external calls, and the free memory pointer is not updated. These
variants of the general schema must be detected by a precise memory
analysis. To this end, we consider that a slot is in \emph{transient}
  state when its baseref has been read from \mem{0x40} but the
  free memory pointer has not been updated, and it is in
  \emph{permanent} state when the free memory pointer has been pushed
  forward.

%\todo{Quitar explicacion codigo. Explicar staticcall quitando pc innecesarios. 
%Explicar return con algo de EVM (de Fig 1)}

\begin{example}[transient slot]
  \label{ex:transient}
  Now we focus on the external call in L\ref{ex2:externalcall} of
  \code{Running2}, which performs a \code{STATICCALL}, reading from
  the stack (see~\cite{yellow} for details) the memory location of the
  input arguments and the location where the results of the call will
  be saved.  Interestingly, both locations reuse the same slot (it
  corresponds to $s_7$) %memory
  as it can be seen in the following \evm bytecode from
  \code{_ownerShipAt}:

\noindent
\begin{minipage}{6.3cm}
  {\begin{lstlisting}[language=evm,escapechar=@]
    PUSH4 0xb04dd20b // func. selector @\vspace{-0.12cm}@ 
    $\dots$
    PUSH1 0x40
@{\tiny0x114:\hspace{-0.1cm}}@ MLOAD   // baseref transient slot  @\vspace{-0.12cm}@ 
    $\dots$ 
    DUP2
    MSTORE  // stores func. selector
    PUSH1 0x04
    ADD     // offset of funct. args. 
    $\cdots$     // copy func. args.
    MSTORE  // stores func. args. @\vspace{-0.12cm}@ 
    $\dots$
\end{lstlisting}}
\end{minipage}
\begin{minipage}{6cm}
  {\begin{lstlisting}[language=evm,escapechar=@]
    PUSH 0x40
@{\tiny0x132:\hspace{-0.1cm}}@ MLOAD      // slot baseref @\vspace{-0.12cm}@ 
    $\dots$
@{\tiny0x139:\hspace{-0.1cm}}@ STATICCALL // external call @\vspace{-0.12cm}@ 
    $\dots$
    PUSH1 0x40
@{\tiny0x151:\hspace{-0.1cm}}@ MLOAD     // slot baseref
    RETURNDATASIZE   @\vspace{-0.12cm}@ 
    $\dots$
    ADD       // baseref + data size @\vspace{-0.12cm}@ 
    $\dots$
@{\tiny0x15E:\hspace{-0.1cm}}@ PUSH1 0x40
@{\tiny0x160:\hspace{-0.1cm}}@ MSTORE   // permanent slot 
\end{lstlisting}}
\end{minipage}

\secbeg
\noindent
The call starts by reading the free memory pointer (at \code{0x114})
and storing at that address the arguments' data (which include the
function selector as first argument).
% , according to the Application Binary Interface (ABI) encoding used
% in
% Ethereum\footnote{\url{https://docs.soliditylang.org/en/develop/abi-spec.html}}
% In this case, such data is composed of the selector of the function
% to be invoked % (4 bytes)
% , followed by the argument (an unsigned integer of 32 bytes) next to
% it. This sequence of 36 bytes
Importantly, the pointer is not pushed forward when the input arguments
are written and thus the slot remains in transient state.
%
% As already explained, the parameter used to pass the length of the
% input and the parameter used to write the returned values have the
% same value: the same memory locations will be used for reading the
% parameters and for writing the result.
%
Once the call at \code{0x139} is executed, the result is written to
memory from the baseref on (overwriting the locations used for the
input arguments) and the slot is finally made permanent by reading the
free memory pointer again (\code{0x151}) and updating it
(\code{0x160}) by adding the actual return data size
(\code{RETURNDATASIZE}).

  Transient slots are also used when returning data from a public
  function to an external caller. In that case, the \evm code of the
  public function halts its execution using a \mcode{RETURN}
  instruction. It reads from the stack the memory location where the
  length and the data to be returned are located. However, it does not
  change \mem{0x40} because the function code halts its execution at
  this point, as we can see in the \EVM code of
  {\code{explicitOwnershipOf}} (corresponds to slot $s_5$):
  
  % This is the case of the function
  % {\code{explicitOwnershipOf}}, that generates the following \evm
  % code:
  % virtual machine then uses that memory location to copy the
  % returned data to the caller's memory

  \noindent
\begin{minipage}{6.5cm}
\begin{lstlisting}[language=evm]
   PUSH1 0x40
@{\tiny0x4D:\hspace{-0.15cm}}@ MLOAD   //@\textbf{{\color{red}ret}}@ slot baseref @\vspace{-0.12cm}@
   $\dots$
   MSTORE  // @\textbf{{\color{red}ret}}@.addr (@\textbf{{\color{red}ret}}@+0x00) @\vspace{-0.12cm}@
   $\dots$
   MSTORE  // @\textbf{{\color{red}ret}}@.startTs (@\textbf{{\color{red}ret}}@+0x20) @\vspace{-0.12cm}@
   $\dots$
   MSTORE  // @\textbf{{\color{red}ret}}@.burned (@\textbf{{\color{red}ret}}@+0x40) @\vspace{-0.12cm}@
\end{lstlisting}
\end{minipage}
\begin{minipage}{6cm}
\begin{lstlisting}[language=evm]
   $\dots$
   PUSH1 0x40
@{\tiny0x5A:\hspace{-0.15cm}}@ MLOAD  //@\textbf{{\color{red}ret}}@ slot revisit
   DUP1     
   SWAP2  //Baseref of @\textbf{{\color{red}ret}}@ plus size
   SUB    //Size of @\textbf{{\color{red}ret}}@ data 
@{\tiny0x5E:\hspace{-0.15cm}}@ SWAP1
@{\tiny0x5F:\hspace{-0.15cm}}@ RETURN //@\textbf{{\color{red}ret}}@ returned
\end{lstlisting} 
\end{minipage}

\noindent
The baseref for the return slot is read (at \code{0x4D}) and it is
used as a transient slot to write the struct contents to be returned
by adding the corresponding offset for each field contained in the
struct (instructions on the left column). The code on the left ends
with the baseref plus the size of the stored data on top of the stack.
After that, the baseref is read again (top of the right column) and
the length of the returned data is computed (by subtracting the
baseref to the baseref plus the size of the stored data) before
calling the \mcode{RETURN} instruction.
\end{example}

% ;; -*- coding: iso-latin-1; TeX-PDF-mode: t; TeX-master: "main" -*-%

\secbeg

\secbeg
\section{Inference of Needless Write Accesses}
\label{sec:mem-analysis}
\secbeg

This section presents our static  inference of needless write accesses.
% In this section, we first describe our memory analysis, whose main
% purpose is the computation of the potential memory slots and their
% accesses along the program execution. After computing such accesses,
% we infer write accesses on locations which are not later read and,
% consequently, can be removed to reduce the gas consumption (and the
% size) of the program.
We first provide some background in
Sec.~\ref{sec:background} on the type of control-flow-graph (CFG) and
static analysis we rely upon.  Then, the analysis is divided into
three consecutive steps:
(1) the slot analysis, which is introduced in
Sec.~\ref{sec:allocation-analysis}, to identify the slots created
along the execution and the program points at which they are
allocated;
(2) the slot usage analysis, presented in
Sec.~\ref{sec:slots-analysis}, which computes the read and write
accesses to the different slots identified in the previous step; and
(3) the detection of needless write accesses, given in
Sec.~\ref{sec:rw-analysis}, which finds those program points where
there is a write access to a slot which has no read access later on.
%% in the program.

\subsection{Context-Sensitive CFG and Flow-Sensitive Static Analysis}
\label{sec:background}

The construction of the CFG of Ethereum smart contracts is a key part
of any decompiler and static analysis tool and has been subject of
previous research~\cite{GrechLTS22,gigahorse,SchneidewindGSM20}.
 The more precise the
CFG is, the more accurate our analysis results will be. In particular,
context-sensitivity~\cite{GrechLTS22} on the CFG construction is vital
to achieve precise results. Our implementation of context-sensitivity
is realized by cloning the blocks which are reached from different
contexts.

\begin{example}[context-sensitive CFG]
\label{ex:cloning}
The EVM code of \code{Running2} %at Fig.~\ref{fig:running1}
creates
multiple slots for handling structs of type
\code{TokenOwnership}. Interestingly, all these slots are created by
means of the same EVM code shown in Ex.~\ref{ex:slotres}, which
corresponds to the CFG block that starts at program point
\code{0x175}. As this block is reached from different contexts, the
context-sensitive CFG contains three clones of this block:
\code{0x175}, which creates \slot{3} at L\ref{ex2:funchead2};
\code{0x175\_0}, which creates \slot{4} used at L\ref{ex2:varres1};
and \code{0x175\_1}, which reserves \slot{6}, created at
L\ref{ex2:funchead3}.
Block cloning means that program points are cloned as well, and we
adopt the same subindex notation to refer to the program points
included in the cloned block: e.g. program point \code{0x178} contains
the \code{MLOAD 0x40} that gets the baseref of the slot
reserved at block \code{0x178}, and \code{0x178_0} to the same \code{MLOAD}
but at \code{0x178_0}, etc.
\end{example}

In what follows, we assume that cloning has been made and
the memory analysis using the resulting CFG (with clones) is thus
context-sensitive as well, without requiring additional extensions.
As usual in
standard analyses~\cite{DBLP:books/daglib/0098888}, one has to define
the notion of \emph{abstract state} which defines the abstract
information gathered in the analysis and the \emph{transfer function}
which models the analysis output for each possible input.  Besides
context-sensitivity, the two analyses that we will present in the next
two sections are \emph{flow-sensitive}, i.e., they make a
flow-sensitive traversal of the CFG of the program using as input for
analyzing each block of the CFG the information inferred for its
callers.  When the analysis reaches a CFG block with new information,
we use the operation $\lub$ to join the two abstract states, and the
operator $\sqsubseteq$  to detect that a fixpoint is reached
and, thus, that the analysis terminates.  The operations $\lub$ and $\sqsubseteq$,
the abstract state, and transfer function, will be defined for each
particular analysis.

\secbeg
\subsection{Slot Analysis}
\label{sec:allocation-analysis}

The slot analysis aims at inferring the \emph{abstract slots}, which
are an abstraction of all memory allocations that will be made along
the program execution. The slots inferred are \emph{abstract} because
over-approximation is made at the level of the program points at which
slots are allocated. Therefore, an abstract slot might represent
multiple (not necessarily consecutive) real memory slots, e.g., when
memory is allocated within a loop.  The slot analysis will look for
those program points at which the value stored in \mem{0x40} is read
for reserving memory space. These program points are relevant in the
analysis for two reasons: firstly, to obtain the baseref of the memory
slot, and, secondly, because from this point on, the memory
reservation of the corresponding slot has started and it is pending to
become permanent at some subsequent program point.  The output of the
slot analysis is a set which contains the allocated abstract slots,
named $\closed$ in Def.~\ref{sec:slot-analysis} below.  Each allocated
abstract slot (i.e., each element in $\closed$) is in turn a set of
program points, as the same abstract slot might have several program
points where \mem{0x40} is read before its reservation becomes
permanent. In order to obtain $\closed$, the memory analysis makes a
flow-sensitive traversal of the (context-sensitive) CFG of the program
that keeps at every program point the set of transient slots
(i.e. whose baseref has been read but it has not yet made permanent)
and applies the transfer function in Def.~\ref{sec:slots-analysis-1}
to each bytecode instruction within the blocks until a fixpoint is
reached.  An \emph{abstract state} of the analysis is a set
$\opened \subseteq \wp(\ppmload)$, where $\ppmload$ is the set of all
program points at which \mem{0x40} is read.  The analysis of the
program starts with
%from an initial state in which
$\opened = \set{\emptyset}$ at all program points and takes $\lub$
and $\sqsubseteq$ as the set union and inclusion operations.
Termination is trivially guaranteed as the number of program points is finite and so is $\wp(\ppmload)$. In what follows, $\textit{Ins}$ is the set of \evm instructions and, for simplicity, we consider \mcode{MLOAD 0x40} and \mcode{MSTORE 0x40} as single instructions in $\textit{Ins}$.

% s that keeps track of those
% instructions \code{MLOAD} on \mem{0x40} that refer to the same memory
% slot, by means of the following transfer function.

\begin{definition}[slot analysis transfer function]\label{sec:slots-analysis-1}
Given a program point $pp$ 

\noindent
\begin{minipage}{0.62\textwidth} 
with an instruction $I\in\textit{Ins}$,
an abstract state $\opened$,
and 
$\closing = \{\mcode{MSTORE 0x40}, \mcode{RETURN}, \mcode{REVERT},$ $\mcode{STOP}, \mcode{SELFDESTRUCT}
\}$,
the \emph{slot analysis transfer function} $\nu$
is defined as a mapping $\nu : \textit{Ins} \times \wp(\opened) \mapsto  \wp(\opened)$ computed according to 
the following table:
\end{minipage}
\hspace{0.2cm}
\begin{minipage}{0.38\textwidth}
{
\renewcommand{\arraystretch}{1.5}%
\scriptsize
$\begin{array}{r|l|c|}
   % \cline{2-4}
%      \multicolumn{1}{c}{}  &  \multicolumn{2}{c}{transfer~function} \\
        \cline{2-3}
      \cline{2-3}
          \multicolumn{1}{c|}{}&
        \multicolumn{1}{c|}{\bf I} &
                                     \multicolumn{1}{c|}{\bf \nu(I,\opened)}
      \\\cline{2-3}\cline{2-3}
        (1) & 
        \smcode{MLOAD 0x40} &
          \{s \cup \{pp\} ~|~ s \in \opened\} 

        \\\cline{2-3}
           (2) & 
           I \in \closing  &
        \set{\emptyset} 

      \\\cline{2-3}
     (3) & 
     \text{otherwise} 
      &
       \opened
     \\\cline{2-3}
    \end{array}
$
} 
\end{minipage}
\end{definition}

Let us explain intuitively how the above transfer function works. As we have
seen in Sec.~\ref{sec:memory-stor-layo}, in an EVM program all memory
reservations start by reading \mem{0x40} by means of a
\code{MLOAD} instruction preceded by a \code{PUSH 0x40} instruction (case 1 in
Def.~\ref{sec:slots-analysis-1}). In this case, the transfer function
  adds to all sets in $\opened$ the current program point, since this is,
  in principle, an access to the same slots that were already open at
  this program point and are not permanent yet. % closed.
%  by means of$\uplus$. 
% \change{We use \code{MLOAD 0x40} (resp. \code{MSTORE 0x40}) to refer
% to these sequences of instructions that access the free memory
% pointer.}
%However, not all read accesses to \mem{0x40} are new allocations: some
%accesses, especially for temporary slots, are just reusing the same
%memory slot already accessed some instructions before.
%%
To properly identify %the closing of 
the slots, our analysis also searches for those program points at which slots
reservations are made permanent (case 2 in Def.~\ref{sec:slots-analysis-1}),
i.e., those program points with instructions $I \in \closing$.
%where, from them on, any subsequent read access to \mem{0x40}
%will allocate a different slot. 
The most frequently used instruction to make a slot reservation
\textit{permanent} is a write access to \mem{0x40} using \code{MSTORE}, that
pushes forward the free memory pointer such that any subsequent read access to
\mem{0x40} will allocate a different slot. The rest of instructions in
$\closing$ finalize the execution in different forms (a normal return, a forced
stop, a revert execution, etc.).   In all such cases, the slot needs to be 
considered as a permanent slot so that we can reason later on potential needless
write accesses involved in it.   The set $\opened$ is empty after these instructions
since all transient (abstract) slots are made permanent after them.
%The following examples
%illustrate different aspects of our analysis.
We use the notation $\opened_{pp}$ to refer to
the abstract state computed at program point $pp$.

\begin{example}[slot analysis]
\label{ex:slotsanalysis}
The slot analysis of \code{Running2} %in Fig.~\ref{fig:running1}
 starts with
    {\small $\opened_{pp} {=} \set{\emptyset}$} at all program points. When it
    reaches the block that starts at \code{0x175} (see Ex.~\ref{ex:slotres})
    {\small $\opened_{\evmcode{0x175}}$} is {\small $\set{\emptyset}$}
    and it remains empty %. Then it keeps the same
    until \code{0x178}, where the baseref of \slot{3} is read and hence
    {\small $\opened_{\evmcode{0x178}} {=} \set{\set{\mcode{0x178}}}$}. This slot is made
    permanent when the free memory pointer is updated at \code{0x17F}, thus
    having {\small $\opened_{\evmcode{0x17D}} {=}
      \set{\set{\mcode{0x178}}}$} and %we get
    {\small $\opened_{\evmcode{0x17F}} {=} \set{\emptyset}$}. % and
    %therefore $\closed = \set{\set{\mcode{0x178}}}$. 
    %%
    Following the same pattern, \slot{4} and \slot{6} are resp.
    reserved at instructions \code{0x178\_0} and \code{0x178\_1} and
    closed at \code{0x17F\_0} and \code{0x17F\_1} (at the cloned
    blocks).
    On the other hand, the baseref of \slot{5} is read at two
    consecutive program points (\code{0x4D} and \code{0x5A}) and
    updated at \code{0x5F}, and thus, we have {\small
      $\opened_{\evmcode{0x4D}} {=} \set{\set{\mcode{0x4D}}}$} and the
    same until {\small
      $\opened_{\evmcode{0x5A}} {=}
      \set{\set{\mcode{0x4D},\mcode{0x5A}}}$} and again the same until
    {\small $\opened_{\evmcode{0x5F}} {=} \set{\emptyset}$}. %%
    Finally, after the execution of \code{STATICCALL} (see
    Ex.~\ref{ex:transient}) we have three consecutive reads of
    \mem{0x40} at \mcode{0x114}, \mcode{0x132} and \mcode{0x151} that
    refer to the same slot \slot{7}, which is made permanent at
    \mcode{0x160}. Therefore, we have {\small
      $\opened_{\evmcode{0x151}} {=} \set{\set{\mcode{0x114},
          \mcode{0x132},\mcode{0x151}}}$} and {\small
      $\opened_{\evmcode{0x160}} = \set{\emptyset}$}.  %%
    %%
    %is added to $\closed$.
    %%
\end{example}

Using the transfer function, as mentioned in
Sec.~\ref{sec:background}, our analysis makes a flow-sensitive traversal
of the (context-sensitive) CFG of the program that uses as input for analyzing each block %of the
%CFG 
the information inferred for its callers. 
When a fixpoint is reached, we have an abstract state for each program point that
we use to compute the set of abstract slots allocated in the program,
named $\closed$.

\secbeg
\begin{definition}\label{sec:slot-analysis}
  The set of allocated abstract slots $\closed$ is defined as \\$\closed = \bigcup_{pp \in \ppmstore} \opened_{pp-1}$, 
  where $\ppmstore$ is the set of all program points $pp{:}I$ where
  $I{\in} \closing$.
\end{definition}
%
%Note that  each element in $\closed$ contains the set of program points
%that allocate the same abstract memory slot. 
\secbeg\secbeg
\begin{example}[$\closed$ computation]
  With the values of $\opened_{\evmcode{0x17F-1}}$, $\opened_{\evmcode{0x17F\_0-1}}$, $\opened_{\evmcode{0x17F\_1-1}}$, $\opened_{\evmcode{0x160-1}}$
    and $\opened_{\evmcode{0x5F-1}}$ from Ex.~\ref{ex:slotsanalysis}, at the end
    of the slot analysis of \code{Running2}, we have: \\
  \noindent
  {\small
  $\closed {=} \set{
  \underbrace{\set{\mcode{0x178}}}_{\slot{3}}, 
  \underbrace{\set{\mcode{0x178\_0}}}_{\slot{4}}, 
  \underbrace{\set{\mcode{0x178\_1}}}_{\slot{6}}, 
  \underbrace{\set{\mcode{0x114}, \mcode{0x132},\mcode{0x151}}}_{\slot{7}},
  \underbrace{\set{\mcode{0x5A}, \mcode{0x4D}}}_{\slot{5}},
  \dots}. 
  $
  }
  \\\noindent 
  Note that, the cloning of block \code{0x175} %(to which program point
  %\code{0x178} belongs) 
  %described at Ex.~\ref{ex:cloning}
   allows our analysis
  to detect three different slots, $\slot{3}$, \slot{4} and \slot{6}, for the same 
  program point, \code{0x178}, in the original EVM code.
  \end{example}

%%
% When the analysis reaches a CFG block with new information, we use
% the operation $\lub$, which corresponds in this case to the usual
% join of sets, to join the two abstract states. The operator
% $\sqsubseteq$ to detect that a fixpoint is reached is also the usual
% comparison of sets.
%

% \todo{Elvira: discutir este parrafo/eliminar?} The first task in the
% analysis is the computation of the \emph{allocation sites} of the
% slots, that is, those sequences of program points where the slot is
% open and its reservation is finished.
% \begin{definition}[slot analysis abstract state]
%   A \textit{slot analysis abstract state} is a pair
%   $\tuple{\opened,\closed}$, where
%   $\opened \subseteq \wp(\ppmload)$,
%   $\closed \subseteq \wp(\ppmload)$, and $\ppmload$ is the
%   set of all program points.
% \end{definition}

%% 
% \fixme{Mover a seccion 2?}  According to the memory description of
% Section~\ref{sec:memory-structure}, apart from the initial memory
% addresses (from \mem{0} to \mem{60}), EVM uses two kinds of memory
% slots: those that are allocated by means of a read access to
% \mem{40} with its corresponding update to reserve the slot, and
% those that are temporary accesses that read \mem{40} and uses the
% slot but never updates the value stored in \mem{40}.
%% 

% To name the different abstract slots that might be allocated in the
% program execution we use identifiers of the form \slot{i}, where $i$
% is a numeric identifier for the slots.

The next example shows the behavior of the analysis when the program
contains loops, and an abstraction is needed for approximating the
slots.

% We use program point $p-1$ to refer to the program point immediately before $p$.
% Note that this program point $p-1$ can be the program point immediately before
% $p$, or it can be the \textit{entry point} of the block that $p$ belongs to.%
% The analysis of each CFG block is performed using the
% calling information found at instructions that might reach the corresponding
% block. When the analysis detects a block which is invoked with a different
% $\opened{}$ set, it joins the sets and analyzes the block again until a fixed
% point is reached.

% When the fixpoint is reached, our analysis gets an abstract state for
% each program point $pp$, denoted by $\tuple{\opened_{pp},\closed_{pp}}$. With this result, we
% can compute the set

% \[\allslots = \bigcup_{\forall pp \in \ppmload} \closed_{pp}
% \]

% \noindent 
% %
% Each element in $\allslots$ corresponds to an abstract slot. For the
% sake of clarity, we will use a numeric identifier to refer to each
% abstract slot in $\allslots$, by means of function $get\_id(a), a\in\allslots$. 

% ;; -*- coding: iso-latin-1; TeX-PDF-mode: t; TeX-master: "main" -*-%

%{\small \setlength\belowcaptionskip{-10pt}{
\begin{figure}[t]%\vspace{-1cm}

\begin{center}
\begin{tabular}{l}
{  \begin{lstlisting}[name=code, numbersep=6pt, firstnumber = 37,language=Solidity]
contract Running3 {
  Running2 c;
  //...
$\slotcc{9}$  $\label{ex3:param}$function explicitOwnershipsOf(uint256[] memory tokenIds)
    $\label{ex3:retparam}$public view returns (TokenOwnership[] memory) {
    unchecked {
      uint256 tokenIdsLength = tokenIds.length;
$\slotcc{10}$$\slotcc{11}$      $\label{ex3:arraydef}$TokenOwnership[] memory ownerships = new TokenOwnership[](tokenIdsLength);
      for (uint256 i; i != tokenIdsLength; ++i) {
$\slotcc{12}$$\slotcc{13}$          $\label{ex3:extcall}$  ownerships[i] = c.explicitOwnershipOf(tokenIds[i]);
      }
$\slotcc{14}$      $\label{ex3:return}$return ownerships;
    }
  }
}
  \end{lstlisting}
  }
\end{tabular}

\secbeg\secbeg\secbeg
\begin{tikzpicture}[->, start chain, thick]
  \tikzset{
    onelement/.style={
        draw, rectangle,minimum height=14pt, minimum width=9pt,
        inner sep=0pt, text=black, node distance=-1pt
    }
}

  \tikzset{
    slot/.style={
      draw, rectangle,minimum height=14pt, minimum width=40pt, 
      inner sep=0pt, text=black, node distance=-1pt
    }
  }

  \tikzset{
    slotarray/.style={
      draw, rectangle,minimum height=14pt, minimum width=17pt, 
      inner sep=0pt, text=black, node distance=-1pt
    }
  }

  \node[onelement] (m0)  [] {};
  \node[onelement] (m20) [right=of m0, label=below:$~~~\evmcode{0x00-0x60}$] {};
  \node[onelement] (m40) [right=of m20] {};
  \node[onelement] (m60) [right=of m40] {};
  \node[slot] (slot9) [right=of m60, label=above:$\slot{9} \nodetext{(L\ref{ex3:param})}$] 
        {\scriptcode{tokenIds}};

  \node[slot] (slot10) [right=of slot9, label=above:$\slot{10} \nodetext{(L\ref{ex3:arraydef})}$] 
        {\scriptcode{ownerships}};

  \draw [-,decorate, decoration = {calligraphic brace, raise = 2pt}] (3.85,0.25) --  (5.65,0.25);
  \node[slotarray] (slot110) [right=of slot10] 
        {\scriptcode{o[0]}};
  \node[slotarray] (slot111) [right=of slot110,label={[label distance=3pt]above:$\slot{11} \nodetext{(L\ref{ex3:arraydef})}$} ] 
        {...};
  \node[slotarray] (slot11n) [right=of slot111] 
        {\scriptcode{o[n]}};

  \node[slotarray] (slot120) [right=of slot11n,label={above:$\slot{12}$}] 
        {\nodetext{c[0]}};

  \node[slotarray] (slot130) [right=of slot120,label={above:$\slot{13}$}] 
        {\scriptcode{o[0]}};

  \node[slotarray] (slot121) [right=of slot130,label={above:$\slot{12}$}] 
        {...};
  \node[slotarray] (slot131) [right=of slot121,label={above:$\slot{13}$}] 
        {...};
  \node[slotarray] (slot12n) [right=of slot131,label={above:$\slot{12}$}] 
        {\nodetext{c[0]}};
  \node[slotarray] (slot13n) [right=of slot12n,label={above:$\slot{13}$}] 
        {\scriptcode{o[n]}};

  \draw [-,decorate, decoration = {calligraphic brace, raise = 2pt}] (9.2,0.25) --  (11.55,0.25);
  \node[slot,slotarray] (slot140l) [right=of slot13n] 
        {\nodetext{r.l}};
  \node[slot,slotarray] (slot140) [right=of slot140l] 
        {\nodetext{r[0]}};
  \node[slot,slotarray] (slot141) [right=of slot140,label={[label distance=3pt]above:$\slot{14} \nodetext{(L\ref{ex3:return})}$}] 
        {...};
  \node[slot,slotarray] (slot14n) [right=of slot141] 
        {\nodetext{r[n]}};

  \draw[->,dashed] (slot10.south) -- +(0,-.2) -| (slot110.south);
  \draw[->,dashed] (slot10.south) -- +(0,-.2) -| (slot111.south);
  \draw[->,dashed] (slot10.south) -- +(0,-.2) -| (slot11n.south);

  \draw[->] (slot10.south) -- +(0,-.4) -| (slot130.south);
  \draw[->] (slot10.south) -- +(0,-.4) -| (slot131.south);
  \draw[->] (slot10.south) -- +(0,-.4) -| (slot13n.south);

\end{tikzpicture}
\end{center}
\secbeg\secbeg\secbeg
\caption{\textsf{Solidity} code of contract \code{Caller}.}
\secbeg\secbeg\secbeg
\label{fig:running-loop}
\end{figure}
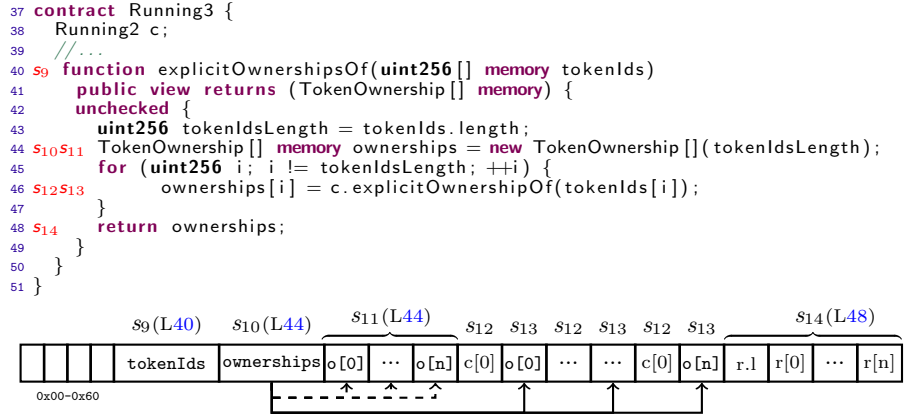

\begin{example}[loops]\label{ex:loops}
  Fig.~\ref{fig:running-loop} shows the contract \code{Running3} that
  includes the function \code{explicitOwnershipsOf} from the smart
  contract at~\cite{running} (made through a \code{STATICCALL}). This function receives an array of token
  identifiers as argument and returns an array of
  \code{TokenOwnership} structs that is populated invoking the
  function \code{explicitOwnershipOf} from \code{Running2} inside a
  loop.
  The slots identified by the analysis for contract \code{Running3}
  shown in Fig.~\ref{fig:running-loop} are:
  \slot{9}, which is created for making a copy of parameter
  \code{tokenIds} to memory;
  \slot{10}, which creates the local array \code{ownerships}
  (L\ref{ex3:arraydef}) that contains the array length and pointers to the structs 
  identified initially by \slot{11} (and later on by \slot{13});
  % \slot{12};
  % (3) \slot{12} L\ref{ex:arraydef} & \textsf{Creation of the
  % structs contained in \scriptcode{ownerships}.}
  \slot{12} for \code{STATICCALL} input arguments and return
  data (L\ref{ex3:extcall}); \slot{13} which abstracts the structs for
  storing the \code{STATICCALL} output results
  (L\ref{ex3:extcall}); and \slot{14}, which includes the length of ownership and a copy of \slot{13}
  for returning the results (L\ref{ex3:return}).
  % \textsf{The pointers in Slot 11 are updated to point to these
  % structs.}
  % 16 L\ref{ex:return} & \textsf{Copy of slot 11 to produce return
  % results.}
  The important point is that, the local array declaration at
  L\ref{ex3:arraydef} produces a loop to allocate as many structs as
  elements are contained in the array. For this reason, \slot{11} is
  an abstract slot that represents all \code{TokenOwnership}'s
  initially added to the array. Similarly, \slot{12} and \slot{13} are
  created inside the \mcode{for} loop, and each abstract slot
  represents as many concrete slots as iterations are performed by the
  loop. Note that, each iteration of the loop creates one instance of
  \slot{12} for getting the results from the call, and it is copied
  later to \slot{13} and pointed by \code{ownerships} (\slot{10}).
\end{example}

As notation, we will use a unique numeric identifier (1, 2, $\ldots$)
to refer to each abstract slot (represented in $\closed$ as a set) and
retrieve it by means of function {\small $get\_id(a), a\in\closed$}.
We use $\allslots$ to refer to the set of all such identifiers in the
program.
Also, given a program point $pp$ with an instruction \code{MLOAD
  0x40}, we define the function $\getslot(pp)$ to retrieve the
identifiers of the elements of $\closed$ that might be referenced at
$pp$ as follows: {\small
  $\getslot(pp) = \{id ~|~ a \in \closed \wedge pp \in a \wedge id =
  get\_id(a) \}.$}
% Note
% that, in essence, an abstract slot models the possible baseref of the slot, that
% is, the first memory position of the slot.
%
Soundness of the analysis is stated in the next theorem.
\begin{theorem}[soundness of slot analysis]\label{th:slots-analysis}
  The set $\closed$ is a sound over-approximation of the concrete
  slots allocated along a program execution.
\end{theorem}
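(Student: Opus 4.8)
The plan is to prove soundness by a standard abstract-interpretation argument: I fix a concrete EVM semantics, define the abstraction relating concrete memory slots to the abstract sets computed by $\nu$, and show that the fixpoint invariant over-approximates every concrete allocation. A concrete execution is a trace of program points of the (cloned, context-sensitive) CFG together with the machine state, of which only the content of \mem{0x40} matters here. The one semantic fact I rely on is that, assuming every advance of the free memory pointer is one of the recognizable instructions in $\closing$ (in particular $\mcode{MSTORE 0x40}$, the remaining ones halting execution), the content of \mem{0x40} is \emph{constant on every maximal trace segment delimited by two consecutive closing events}. Hence every $\mcode{MLOAD 0x40}$ executed inside such a segment reads the same baseref, so the segment allocates \emph{at most one} concrete slot; I define that slot's \emph{read-point set} $R$ as the set of program points at which $\mcode{MLOAD 0x40}$ fires within the segment (an empty segment allocates no slot, giving $R=\emptyset$). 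Since this is precisely what $\closed$ records, soundness reduces to showing that for every concrete slot arising in any execution, its read-point set $R$ belongs to $\closed$.

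The core is an invariant proved by induction on the length of a concrete trace: after executing the instruction at program point $pp$, the read-point set $R$ accumulated for the currently open slot (since the last closing event, or since the start) satisfies $R\in\opened_{pp}$. The base case holds because execution starts with $R=\emptyset$ while the analysis initialises every $\opened_{pp}$ with $\set{\emptyset}$. For the step I match each case of $\nu$ against the concrete effect: at $\mcode{MLOAD 0x40}$ (case~1) the concrete set becomes $R\cup\set{pp}$ and $\nu$ maps each $s$ to $s\cup\set{pp}$; at a closing instruction (case~2) the concrete set resets to $\emptyset$ and $\nu$ returns $\set{\emptyset}$; otherwise (case~3) both are unchanged. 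At a CFG merge the concrete trace enters $pp$ from a single predecessor whose post-state is, by flow-sensitivity and the computed fixpoint, included in the state entering $pp$; because $\lub$ is set union the witnessing element is retained, so membership is preserved.

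The theorem then follows directly. Whenever a closing event fires at a point $q\in\ppmstore$, the slot it makes permanent has read-point set $R$ equal to the set accumulated immediately before $q$, and the invariant yields $R\in\opened_{q-1}$; by Def.~\ref{sec:slot-analysis}, $\opened_{q-1}\subseteq\closed$, so $R\in\closed$ and every concrete slot is captured. The \emph{over}-approximation is twofold and harmless for soundness: one abstract slot (a single set $R$) represents \emph{all} concrete slots sharing those read points---in particular every slot produced by one iteration of a loop---and $\closed$ may contain additional read-point sets coming from paths or executions not taken. Finiteness of $\wp(\ppmload)$ guarantees that the fixpoint, and hence $\closed$, is well defined.

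I expect the main obstacle to be the semantic grounding of the constancy of \mem{0x40} between closings, that is, justifying that $\closing$ intercepts \emph{every} modification of the free memory pointer; this is where reliance on \texttt{solc}'s code-generation pattern (free-pointer writes appear syntactically as $\mcode{MSTORE 0x40}$) enters, and where transient slots such as \slot{5} and \slot{7}---which read the baseref several times without advancing the pointer---must be argued to collapse into one concrete slot rather than several. A secondary difficulty is the loop case: showing that a single abstract slot soundly subsumes unboundedly many concrete slots, which the induction handles precisely because each iteration's per-segment read-point set coincides with one of the finitely many elements gathered at the fixpoint.
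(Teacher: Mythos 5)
Your proposal is correct and takes essentially the same route as the paper's proof: the paper likewise grounds soundness in \texttt{solc}'s code-generation assumptions (every free-pointer advance is an instruction in $\closing$, and the pointer is ever-increasing, which is exactly your constancy-between-closings fact), and its key Lemma is proved by the same induction on trace length with case analysis on the three cases of $\nu$, concluding via $\closed = \bigcup_{pp \in \ppmstore} \opened_{pp-1}$. The only cosmetic differences are that you carry the slightly stronger membership invariant $R \in \opened_{pp}$ where the paper states $\alpha(s) \subseteq a$ for some $a \in \eq_{pp}$, and you conclude directly at closing points where the paper detours through a $\preceq$-maximality argument over all program points.
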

The proof % is quite
% straightforward, as we are just inferring the allocation
% sites of the slots rather than more complex features like their
% shapes. It
basically amounts to showing that all concrete accesses to the same
slot are approximated by the same abstract slot (see details in Appendix~\ref{ap:proofs}).% (see details
% in~\cite{XX})
. % This requires defining the
% concrete and abstract semantics, which will be part of an extended version of
% the paper.

\secbeg\secbeg
\subsection{Slot Access Analysis}
\label{sec:slots-analysis}

While Sec.~\ref{sec:allocation-analysis} looked for allocations, the
next step of the analysis is the inference of the program points at
which the inferred abstract slots might be
accessed. To do so, our slot access analysis needs to propagate the
references to the abstract slots that are
saved at the different positions of the execution stack. % To do so, we
% use the \emph{stack-sensitive} CFG computation described
% in~\cite{AlbertCGRR21}, where the execution stack is flattened.
Importantly, we keep track, not only of the stack positions, but also,
in order to abstract complex data structures stored in memory (e.g.,
arrays of structs), we need to keep track of the abstract slots that
could be saved at memory locations. 
As seen in Ex.~\ref{ex:loops}, a memory location within a slot might
contain a pointer to another memory location of another slot, as it
happens when nested data structures are used.  Thus, an abstract state
is a mapping at which we store the potential slots saved at stack
positions or at memory locations within other slots.

\begin{definition}[memory analysis abstract state]
\label{def:memory-analysis-as}
  A \textit{memory analysis abstract state} is a mapping $\pi$ of the
  form $\stackset \cup \allslots \mapsto \wp(\allslots)$.
    
\end{definition}

$\stackset$ is the set containing all stack positions, which we
represent by natural numbers from 0 (bottom of the stack) on, and
$\allslots$ is the set of abstract slots identifiers computed in
Sec.~\ref{sec:allocation-analysis}.  We refer to the set of all memory analysis abstract states as $AS$.
Note that, for each entry, we keep a set of potential slots for each
stack position because a block might be reached from several blocks
with different execution stacks, e.g., in loops or
\textit{if-then-else} structures.
In what follows, we assume that, given a value $k$, the map $\pi$
returns the empty set when $k \not\in dom(\pi)$.
The inference is performed by a flow-sensitive analysis (as described
in Sec.~\ref{sec:background}) that keeps track of the information
about the abstract slots used at any program point by means of the
following transfer function.

\begin{definition}[memory analysis transfer function]
\label{def:transfer-accesses}
Given an instruction $I$ with $n$ input
operands at program point $pp$ and an abstract state $\pi$, the \emph{memory analysis
  transfer function} $\tau$ is defined as a mapping
{\small $\tau : \textit{Ins} \times AS \mapsto AS$} of the form:
% computed according to the following table: 
\\[-0.2cm]
\secbeg\secbeg
\begin{minipage}{0.55\textwidth}
{
\renewcommand{\arraystretch}{1.5}%
\scriptsize
\hspace*{-0.5cm}
\[
    \begin{array}{r|l|l|}
       \cline{2-3}
        \multicolumn{1}{c|}{}&
        \multicolumn{1}{c|}{I} &
        \multicolumn{1}{c|}{\tau(I,\pi)} 

        \\\cline{2-3}
        (1) & 
        \smcode{MLOAD 0x40} &
        \pi[t \mapsto  \getslot(pp)] 

        \\\cline{2-3}
           (2) & 
     \smcode{MLOAD} &
     \pi[t \mapsto \{m ~|~ s \in \pi(t) \wedge m {\in} \pi(s)\}]

      \\\cline{2-3}
      (3) & 
      \smcode{MSTORE}  &
      \pi [s \mapsto \pi(s)\cup \pi(t{-}1)]\backslash \{t,t{-}1\} ~ \forall s {\in} \pi(t)  
      \\\cline{2-3}
     %  (4) & 
     %  \smcode{SWAP}i  &
     %  \pi [t \mapsto \pi(t-i), t-i \mapsto \pi(t)]
     %  \\\cline{2-3}
     %  (5) & 
     %  \smcode{DUP}i  &
     %  \pi [t+1 \mapsto \pi(t-i+1)]
     %  \\\cline{2-3}
     % (6) & 
     % \text{otherwise} 
     %  &
     %  \pi \backslash x  \hspace{2cm}{t{-}n < x \leq t} 
      
     % \\\cline{2-3}

    \end{array}
  \]
}
\end{minipage}
\begin{minipage}{0.45\textwidth}
{
\renewcommand{\arraystretch}{1.5}%
\scriptsize
\hspace*{-0.5cm}
\[
    \begin{array}{r|l|l|}
       \cline{2-3}
        \multicolumn{1}{c|}{}&
        \multicolumn{1}{c|}{I} &
        \multicolumn{1}{c|}{\tau(I,\pi)} 

        \\\cline{2-3}
     %    (1) & 
     %    \smcode{MLOAD 0x40} &
     %    \pi[t \mapsto  \getslot(pp)] 

     %    \\\cline{2-3}
     %       (2) & 
     % \smcode{MLOAD} &
     % \pi[t \mapsto \{m ~|~ s \in \pi(t) \wedge m {\in} \pi(s)\}]

     %  \\\cline{2-3}
     %  (3) & 
     %  \smcode{MSTORE}  &
     %  \pi [s \mapsto \pi(s)\cup \pi(t{-}1)] ~~~~~ \forall s {\in} \pi(t)  
     %  \\\cline{2-3}
      (4) & 
      \smcode{SWAP}i  &
      \pi [t \mapsto \pi(t-i), t-i \mapsto \pi(t)]
      \\\cline{2-3}
      (5) & 
      \smcode{DUP}i  &
      \pi [t+1 \mapsto \pi(t-i+1)]
      \\\cline{2-3}
     (6) & 
     \text{otherwise} 
      &
      \pi \backslash x  \hspace{1.4cm}{t{-}n < x \leq t} 
     \\\cline{2-3}
    \end{array}
  \]
}
\end{minipage}
 $t{=}top(pp)$ is the numerical position of the top
of the stack before executing $I$.
\end{definition}

Let us explain the above definition. The transfer function
distinguishes between two different types of \code{MLOAD}: (1)
accesses to location \mem{0x40}, which return the baseref of
the slots that might be used, taking them from the previous analysis
through $\getslot(p)$; and (2) other \code{MLOAD} instructions, which
could potentially return slot baserefs from memory locations.
Therefore, we have to consider two possibilities: if we are reading a
memory location which reads a generic value (e.g. a number) then
$\pi(t) = \emptyset$; if we are reading a memory location that might
store an abstract slot, then $\pi(t)$ contains all abstract slots that
might be stored at that memory location.
Regarding (3), \code{MSTORE} has two operands: the operand at $t$ is
the memory address that will be modified by \code{MSTORE}, and the
operand at $t-1$ is the value to be stored in that address.
For each element $s$ in $\pi(t)$, the analysis adds the abstract slots
that are in $\pi(t{-}1)$.
Other instructions that are also treated by the analysis are
\code{SWAP*} and \code{DUP*} shown in (4-5), that exchange or
copy the elements of the stack that take part in the operation.
Finally, all other operations delete the elements of the stack that
are no longer used based on the number of elements taken and written
to the stack (case 6).

%
% ;; -*- coding: iso-latin-1; TeX-PDF-mode: t; TeX-master: "main" -*-%

\begin{figure}[t]
  \secbeg\secbeg\secbeg\secbeg
  {\scriptsize
\[
\begin{array} {lll||lll}
\text{\textsf{PP}} & \text{\textsf{Instr}} & ~~\pi& \text{\textsf{PP}} & \text{\textsf{Instr}} & ~~\pi \\ \hline
\evmcode{0x175\_1}  &\evmcode{JUMPDEST} & \{3 {\mapsto} s_3, 4 {\mapsto} s_4\} &\evmcode{0x19A\_1}  &\evmcode{MSTORE} & \{3 {\mapsto} s_3, 4 {\mapsto} s_4,8 {\mapsto} s_6, 9 {\mapsto} s_6\} \\
\evmcode{0x176\_1}  &\evmcode{PUSH1 0x40} & \{3 {\mapsto} s_3, 4 {\mapsto} s_4 \}&... \\
\evmcode{0x178\_1}  &\evmcode{MLOAD} & \{3 {\mapsto} s_3, 4 {\mapsto} s_4, 8 {\mapsto} s_6\}&\evmcode{0x1A9\_1}  &\evmcode{AND} & \{3 {\mapsto} s_3, 4 {\mapsto} s_4,8 {\mapsto} s_6, 9 {\mapsto} s_6\} \\
\evmcode{0x179\_1}  &\evmcode{DUP1} & \{3 {\mapsto} s_3, 4 {\mapsto} s_4, 8 {\mapsto} s_6, 9 {\mapsto} s_6\}&\evmcode{0x1AA\_1}  &\evmcode{DUP2} & \{3 {\mapsto} s_3, 4 {\mapsto} s_4,8 {\mapsto} s_6, 9 {\mapsto} s_6, 11 {\mapsto} s_6\} \\
\evmcode{0x17A\_1}  &\evmcode{PUSH1 0x60} & \{3 {\mapsto} s_3, 4 {\mapsto} s_4,8 {\mapsto} s_6, 9 {\mapsto} s_6\}&\evmcode{0x1AB\_1}  &\evmcode{MSTORE} & \{3 {\mapsto} s_3, 4 {\mapsto} s_4,8 {\mapsto} s_6, 9 {\mapsto} s_6\} \\
\evmcode{0x17C\_1}  &\evmcode{ADD} & \{3 {\mapsto} s_3, 4 {\mapsto} s_4,8 {\mapsto} s_6, 9 {\mapsto} s_6\} &... \\
\evmcode{0x17D\_1}  &\evmcode{PUSH1 0x40} & \{3 {\mapsto} s_3, 4 {\mapsto} s_4,8 {\mapsto} s_6, 9 {\mapsto} s_6\}&\evmcode{0x1B2\_1}  &\evmcode{ISZERO} & \{3 {\mapsto} s_3, 4 {\mapsto} s_4,8 {\mapsto} s_6, 9 {\mapsto} s_6\} \\
\evmcode{0x17F\_1}  &\evmcode{MSTORE} & \{3 {\mapsto} s_3, 4 {\mapsto} s_4,8 {\mapsto} s_6\}&\evmcode{0x1B3\_1}  &\evmcode{DUP2} & \{3 {\mapsto} s_3, 4 {\mapsto} s_4,8 {\mapsto} s_6, 9 {\mapsto} s_6, 11 {\mapsto} s_6\} \\
\evmcode{0x180\_1}  &\evmcode{DUP1} & \{3 {\mapsto} s_3, 4 {\mapsto} s_4,8 {\mapsto} s_6, 9 {\mapsto} s_6\}&\evmcode{0x1B4\_1}  &\evmcode{MSTORE} & \{3 {\mapsto} s_3, 4 {\mapsto} s_4,8 {\mapsto} s_6, 9 {\mapsto} s_6\} \\
...&&&\evmcode{0x1B5\_1}  &\evmcode{POP} & \{3 {\mapsto} s_3, 4 {\mapsto} s_4, 8 {\mapsto} s_6\} \\

\evmcode{0x198\_1}  &\evmcode{AND} & \{3 {\mapsto} s_3, 4 {\mapsto} s_4,8 {\mapsto} s_6, 9 {\mapsto} s_6\} &\evmcode{0x1B6\_1}  &\evmcode{SWAP1} & \{3 {\mapsto} s_3, 4 {\mapsto} s_4,7 {\mapsto} s_6\} \\
\evmcode{0x199\_1}  &\evmcode{DUP2} & \{3 {\mapsto} s_3, 4 {\mapsto}
                                      s_4,8 {\mapsto} s_6, 9 {\mapsto}
                                      s_6, 11 {\mapsto} s_6\}&\evmcode{0x1B7\_1}  &\evmcode{JUMP} &  \{3 {\mapsto} s_3, 4 {\mapsto} s_4,7 {\mapsto} s_6\} \\
\end{array} 
\]
}
\vspace{-0.5cm}
  \caption{\textsf{Block of the CFG  that reserves memory slot for struct }}
  \label{fig:accesses}
  \secbeg\secbeg\secbeg
\end{figure}

\begin{example}[transfer]
  \label{ex:memory-analysis}
  Now we focus on the analysis of block \code{0x175}, shown in Fig.~\ref{fig:accesses}.
  As we have already explained, this block is responsible for creating the
  memory needed to work with several structs of type \code{TokenOwnership} and
  it is thus cloned in the CFG. In particular, we focus on the
  clone \code{0x175\_1}.
The analysis of the block starts with a stack of size 7 and includes at
positions $3$ and $4$, the abstract slots \slot{3} and \slot{4}, which were
created at L\ref{ex2:funchead1} and L\ref{ex2:funchead2} of
Fig.~\ref{fig:running1}. At \code{0x178\_1}, \mem{0x40} is read, and, by means
of $get\_slots(\mcode{0x178\_1})$ and, considering that {\small$top(\mcode{0x178\_1}) {=}
8$}, we add to $\pi$ a new entry {\small $8 \mapsto s_6$}.
At \code{0x179\_1}, \code{0x180\_1}, \code{0x1AA\_1}, \code{0x1B3\_1} the
transfer function duplicates a slot identifier stored in the stack.
 \code{MSTORE} and \code{POP} instructions of the
example remove a slot identifier from the stack.

\end{example}

As it is flow-sensitive, the analysis of each block of the CFG takes
as input the join $\lub$  of the abstract states computed with the transfer
function for the blocks that jump to it, and keeps applying the memory
analysis transfer function until a fixpoint is reached.  The
operation $A \lub B$ is the result of joining, by means of operation
$\cup$, all entries from maps $A$ and $B$. Operation $\sqsubseteq$ is
defined as expected, $A \sqsubseteq B$, when B includes entries that
are not in $dom(A)$ or when we have an entry $v \in
dom(A) \cap \dom(B)$ such that $A(v) \subseteq B(v)$.  Again,
termination of the computation is guaranteed because the domain is
finite.

% analyses~\cite{DBLP:books/daglib/0098888}, when the analysis reaches a
% CFG block with new information, we use operation $\lub$ to join the
% two abstract states until a fixpoint is reached (detected by using the
% operation $\sqsubseteq$) for all blocks in the
% CFG.

\begin{example}[joining abstract states]
  The \EVM code of \code{explicitOwnershipOf} of
   Fig.~\ref{fig:running1} uses $s_5$ in both \mcode{return} sentences at
   L\ref{ex2:return1} and L\ref{ex2:return2} (see Ex.~\ref{ex:solcode}). This
   \EVM code has a single return block which is reachable from two different
   paths from the \mcode{if} statement, and which come with different abstract
   states: (1) the path that corresponds to L\ref{ex2:return1} comes
   with {\small$\pi {=} \{3 \mapsto s_8\}$}, and the other path
   (L\ref{ex2:return2}) with {\small$\pi {=} \{3 \mapsto s_4\}$}. Our
   analysis joins both abstract states resulting in {\small$\pi {=} \{3 \mapsto
   \{s_4,s_8\}\}$}. Because of this join, we get that the \mcode{RETURN}
   instruction that comes from lines L\ref{ex2:return1} and L\ref{ex2:return2}
   might return the content of the slots $s_4$ or $s_8$.
\end{example}

 When
the fixpoint is reached,
the analysis has computed an abstract state for each program point $pp$, denoted by
$\pi_{pp}$ in what follows.

\begin{example}[complex data structures]
  The analysis of the code at Fig.
  \ref{fig:running-loop} shows how it deals with data structures that
  might contain pointers to other structures, e.g. \code{ownerships}. The
  abstract slot that represents variable \code{ownerships} is $s_{10}$, which is
  written, by means of \code{MSTORE} at two program points, say $pp_1$ % \code{0xAC} and
  and $pp_2$ %\code{0x1AA}, 
  which, resp., come from L\ref{ex3:arraydef} and L\ref{ex3:extcall} of
  the Solidity code.
  The input abstract state that reaches $pp_1$ is %  \code{0xAC} is %%
  {\small $\set{2 \mapsto  s_{9}, 6 \mapsto  s_{10}, 8 \mapsto  s_{10}, 9 \mapsto  s_{11}, 10 \mapsto  s_{10}}$},
  and the transfer function of \code{MSTORE} leaves the abstract state as 
  {\small $\pi_{pp_1} = \set{2 \mapsto  s_{9}, 6 \mapsto  s_{10}, 8 \mapsto  s_{10}, s_{10} \mapsto  s_{11}}$}.
%
%   {\scriptsize 
%   \[
%   \begin{array} {lcll}
%   \text{\textsf{PP}} & \text{\textsf{Stack}} & \text{\textsf{Instr}} & \pi \\ \hline
%   \evmcode{0xAB} & 11 & \evmcode{DUP2} & \{2 \mapsto  s_{9}, 6 \mapsto  s_{10}, 8 \mapsto  s_{10}, 9 \mapsto  s_{11}, 10 \mapsto  s_{10}\} \\
%   \evmcode{0xAC} & 9 & \evmcode{MSTORE} &  \{2 \mapsto  s_{9}, 6 \mapsto  s_{10}, 8 \mapsto  s_{10}, s_{10} \mapsto  s_{11}\}
% \end{array}
% \]
%
  \noindent At this point, we can see that variable \code{ownerships} is initialized with
  empty structs % \code{TokenOwnership}'s
  and, to represent it, our analysis includes in
  $\pi$ the entry $s_{10} \mapsto s_{11}$ as it is described in instruction
  \code{MSTORE} of the transfer function at
  Def.~\ref{def:transfer-accesses}. 
  The second write to $s_{10}$ is performed by another \code{MSTORE} instruction at
  $pp_2$. %%\code{0x19B}. 
  The input abstract state for $pp_2$ is  
  {\small $\set{{2} \mapsto  s_{9}, 5 \mapsto  s_{10}, 7 \mapsto  s_{13}, 8 \mapsto  s_{13}, 9 \mapsto  s_{10}, s_{10} \mapsto  s_{11}}$},
  and thus we get
  {\small $\pi_{pp_2} = \set{{2} \mapsto  s_{9}, 5 \mapsto  s_{10}, 7 \mapsto  s_{13}, s_{10} \mapsto  \{s_{11}, s_{13}\}}$}.
%   {\scriptsize 
%   \[
%   \begin{array} {lcll}
%   \text{\textsf{Pp}} & \text{\textsf{Stack}} & \text{\textsf{Instr}} & \pi \\ \hline
%   \evmcode{0x19A} & 10 & \evmcode{SWAP1}  &  \{{2} \mapsto  s_{9}, 5 \mapsto  s_{10}, 7 \mapsto  s_{13}, 8 \mapsto  s_{13}, 9 \mapsto  s_{10}, s_{10} \mapsto  s_{11}\} \\
%   \evmcode{0x19B} & 8  & \evmcode{MSTORE} &  \{{2} \mapsto  s_{9}, 5 \mapsto  s_{10}, 7 \mapsto  s_{13}, s_{10} \mapsto  \{s_{11}, s_{13}\}\}
% \end{array}
% \]
%%
  \noindent Interestingly, at $pp_2$, %\code{0x19B},
  we detect that $s_{11}$ might also store the structs returned by the
  call to \code{c.explicitOwnershipOf(tokenIds[i])}, identified by
  $s_{13}$, which is added to {\small
    $s_{10} \mapsto \{s_{11}, s_{13}\}$}.
  Finally, $s_{10}$ is read at the end of the method, returning the
  set {\small $\{s_{11}, s_{13}\}$}, to copy the content of
  \code{ownerships} to $s_{14}$, the slot used in the return.
\end{example}

Soundness guarantees that memory accesses within the same concrete
slot are approximated by the same abstract slot. It easily follows
from
%the
%soundness in
Theorem~\ref{th:slots-analysis} (see details in Appendix~\ref{ap:proofs}).

\secbeg
\begin{theorem}[soundness of slot access analysis]
$\pi_{pp}$  is a sound \\ over-approximation of the slots accessed at
instruction $pp$.
\end{theorem}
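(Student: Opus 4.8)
The plan is to establish soundness by relating the abstract states $\pi$ to an \emph{instrumented} concrete semantics. First I would augment each standard EVM configuration (stack plus memory contents) with a tagging that records, for every stack position and for every memory location holding a value, which concrete slot---if any---that value is a baseref (or offset-shifted baseref) of. This tagging is well defined precisely because Theorem~\ref{th:slots-analysis} guarantees that every concrete slot created along the execution is captured by a unique abstract slot $a \in \closed$ with identifier $get\_id(a)$. I would then define a \emph{soundness relation} $\sigma \models \pi$ between an instrumented configuration $\sigma$ at program point $pp$ and the abstract state $\pi=\pi_{pp}$, asserting: (i) whenever stack position $k$ holds a pointer into a concrete slot $c$, then $get\_id(c) \in \pi(k)$; and (ii) whenever a location inside a concrete slot $c$ holds a pointer into a concrete slot $c'$, then $get\_id(c') \in \pi(get\_id(c))$. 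The theorem then reduces to proving that $\sigma \models \pi_{pp}$ holds for every reachable $\sigma$; specialising (i) to the address operand of an \mcode{MLOAD}/\mcode{MSTORE} at $pp$ yields exactly the claim that $\pi_{pp}$ over-approximates the slots accessed at $pp$.

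The core argument is an induction on the length of the concrete execution trace. The base case is the initial configuration with empty stack and the empty abstract mapping (every key mapping to $\emptyset$), for which $\models$ holds vacuously since no pointers have yet been created. For the inductive step I would assume $\sigma \models \pi$ at $pp{:}I$ and show $\sigma' \models \tau(I,\pi)$ for the successor $\sigma'$, by case analysis on the six cases of Def.~\ref{def:transfer-accesses}. Case~(1), \mcode{MLOAD 0x40}, is where the link to the slot analysis is used: the concrete read returns the baseref of whichever concrete slot is being reserved at $pp$, and by Theorem~\ref{th:slots-analysis} the identifier of that slot lies in $\getslot(pp)$, matching $\pi[t \mapsto \getslot(pp)]$. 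Cases~(4)--(5) for \mcode{SWAP}$i$/\mcode{DUP}$i$ and case~(6) for the remaining instructions (which consume their operands and push a non-pointer result) are routine bookkeeping: they merely permute, copy, or discard tagged stack entries, so $\models$ is preserved by construction. To cover the fixpoint computation, I would also prove that $\models$ is monotone in its abstract argument (if $\sigma \models \pi$ and $\pi \sqsubseteq \pi'$ then $\sigma \models \pi'$) and preserved by the join (if $\sigma \models \pi_1$ or $\sigma \models \pi_2$ then $\sigma \models \pi_1 \lub \pi_2$), which holds because $\lub$ takes unions entry-wise; these two facts guarantee that the state stabilising at each block still over-approximates every concrete configuration reaching it along any path.

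I expect the main obstacle to be cases~(2) and~(3), the \mcode{MLOAD}/\mcode{MSTORE} that read and write pointers through memory locations inside slots, since these model nested data structures such as arrays of structs (e.g.\ \slot{10} pointing to \slot{11}/\slot{13} in Ex.~\ref{ex:loops}). The delicate point is that the abstraction collapses all offsets within a slot into the single identifier $get\_id(c)$, and that the address operand at $t$ may abstractly denote \emph{several} concrete slots, as $\pi(t)$ is a set. Hence for \mcode{MSTORE} the transfer function performs a \emph{weak} update, adding $\pi(t{-}1)$ to $\pi(s)$ for every $s \in \pi(t)$ rather than overwriting. I would argue that, although a concrete \mcode{MSTORE} writes exactly one location of one slot $c$, soundness is preserved precisely because $get\_id(c) \in \pi(t)$ (invariant~(i) applied to the address operand) and because the update only enlarges the stored sets, so no concrete pointer can be dropped. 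Symmetrically, for \mcode{MLOAD} I would show that the loaded pointer's slot is already recorded in $\bigcup_{s \in \pi(t)} \pi(s)$. Verifying that offset collapsing never loses a reachable pointer---together with checking that the weak update is genuinely an over-approximation under aliasing of abstract slots---is the technically heaviest part; everything else follows directly from the inductive invariant and Theorem~\ref{th:slots-analysis}.
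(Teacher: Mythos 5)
Your proposal follows essentially the same route as the paper's proof: an induction on the length of the concrete execution trace with a case analysis over the six cases of the transfer function of Def.~\ref{def:transfer-accesses}, invoking the soundness of the slot analysis (Theorem~\ref{th:slots-analysis}) for the \mcode{MLOAD 0x40} case and treating the \mcode{MLOAD}/\mcode{MSTORE} cases via the pointer-vs-generic-value distinction and the weak update. Your explicit relation $\sigma \models \pi$ covering both stack positions and slot-to-slot entries, plus the monotonicity and join-preservation lemmas for the fixpoint, merely makes rigorous the invariant that the paper's induction hypothesis states informally (``$\eq_{pp}$ correctly keeps track of all potential abstract slots''), so it is a more careful rendering of the same argument rather than a different one.
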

 
\secbeg\secbeg
\subsection{Inference of Needless Write Memory Accesses}
\label{sec:rw-analysis}
\secbeg

With the results of the previous analysis, we can compute the maps
$\readset$ and $\writeset$, which are of the form
$pp \mapsto \wp(\allslots)$ and capture the slots that might be read
or written, resp., at the different program points.  To do so, as
multiple EVM instructions, e.g. \mcode{RETURN}, \code{CALL},
\code{LOG}, \code{CREATE}, ..., might read, or write, memory locations
taking the concrete location from the stack, we define functions
$\memread(I)$ and $\memwrite(I)$ that, given an EVM instruction $I$,
return the position in the stack of the address to be read and written
by $I$, resp. If the instruction does not read/write any memory
position, function {\small $\memread(I) = \bot$}/{\small
  $\memwrite(I) = \bot$}. For example, {\small
  $\memread(\text{\code{MLOAD}}) = 0$} as it reads the top of the
stack and {\small $\memwrite(\text{\code{MLOAD}}) = \bot$}, or {\small
  $\memread(\text{\code{STATICCALL}}) = 2$} and {\small
  $\memwrite(\text{\code{STATICCALL}}) = 4$}.
Now, we define the read/write maps $\readset$/$\writeset$: 

\secbeg
\begin{definition}[memory read/write accesses map]
    Given an \EVM program P, such that {\small $pp \equiv I \in P$} and being {\small $t {=} top(pp)$},  we
    define maps $\readset$ and $\writeset$ as follows: 
\secbeg
{
  \small
    \[ 
    \begin{array}{lrlr}
      \readset(pp) {=} \begin{dcases}\emptyset & \memread(I) = \bot\\[-0.15cm]
        \pi_{pp{-}1}(t{-}\memread(I)){} & otherwise \end{dcases}
&~~~~~
      \writeset(pp) {=} \begin{dcases}\emptyset & \memwrite(I) = \bot \\[-0.15cm] 
        \pi_{pp{-}1}(t{-}\memwrite(I)) & otherwise \end{dcases}
    \end{array}    
  \] 
}
\end{definition}

\secbeg\secbeg
\begin{example}[$\readset$/$\writeset$ maps]
Let us illustrate the computation of {\small $\readset(\mcode{0x139})$} and
{\small $\writeset(\mcode{0x139})$}, which contains the  \code{STATICCALL} of
\code{Running2}. With the analysis information obtained from the analysis we
have that 
{\small $top(\mcode{0x139}) = 16$} and 
{\small $\pi_{\mcode{0x138}} = \{3 \mapsto  s_{3}, 4 \mapsto  s_{4}, 7 \mapsto  s_{6}, 10 \mapsto  s_{7}, 12 \mapsto s_7, 14 \mapsto s_7\}$}, thus 
%% -- 301[9]GAS --  stack^17 = {3: ['slot3'], 4: ['slot4'], 7: ['slot6'], 10: ['slot7'], 11: ['null'], 12: ['slot7'], 14: ['slot7']} :: memory = {}
we get {\small$\readset(\mcode{0x139}) = \{s_{7}\}$} and {\small$\writeset(\mcode{0x139}) = \{s_{7}\}$}, i.e., the 
slot used for managing the input and the output of the external call. 
Analogously, we get that {\small $\readset(\mcode{0x178}) = \{s_3\}$} and
{\small $\writeset(\mcode{0x178}) = \emptyset$}. 
\end{example}

The last step of our analysis consists in searching for write accesses to slots
which will never be read later. To do so, we use the information computed in
$\readset$ and $\writeset$. Given the CFG of the program and two program points
$p$ and $p2$, we define function {\small $\reachable(p,p2)$}, which returns $true$ when
there exists a path in the CFG from $p$ to $p2$. We define the set \emph{write
leaks}  $\writingleaks$ as follows: 

\begin{definition}\label{def:rw} Given an \EVM program and its 
    $\writeset$ and $\readset$, we define  $\writingleaks$ as \\[0.05cm]
    {\small
        $\writingleaks = \{pw{:}s ~|~ pw \in P \wedge s \in \writeset(pw) \wedge 
        \neg \existsread(pw,s)\}$\\[0.05cm]
    }
    where \small{$\existsread(pw,s) \equiv \exists~ pr \in dom(\readset) ~|~ s \in \readset(pr) \wedge \reachable(pw,pr)$}.
\end{definition}

\secbeg
Intuitively, the set $\writingleaks$ contains those write accesses,
taken from $\writeset$, that are never read by subsequent blocks in
the CFG. As both function \emph{reachable} and the sets $\writeset$
and $\readset$ are over-approximations, the computation of
$\writingleaks$ provides us those write accesses that can be safely
removed, as the next example shows.

\begin{example}
  Our analysis detects that at program points \code{0x19A},
  \code{0x1AB} and \code{0x1B4} there are \code{MSTORE} operations
  that are never read in the subsequent blocks of the CFG. Such
  operations correspond to the memory initialization of \slot{3},
  which is performed at L\ref{ex2:funchead2} of the code of
  Fig.~\ref{fig:running1} (see Ex.~\ref{ex:slotres}). Given that these
  write accesses are the only use of the slot, the whole reservation
  can be safely removed.
  Moreover, the analysis detects that program points \code{0x19A\_1},
  \code{0x1AB\_1} and \code{0x1B4\_1}, which correspond to the
  reservation of \slot{6} performed at L\ref{ex2:funchead3}, are
  detected as needless.
  In essence, it means that \slot{3} and \slot{6} are allocated and
  initialized but are never used in the program.
  Note that, all these program points belong to two blocks cloned:
  (\code{0x175} and \code{0x175\_1}).
  However, the three \code{MSTORE} operations of the other clone of
  the same block (\code{0x175\_0}), which correspond to the allocation
  at L\ref{ex2:varres1} are not identified as non-read, as they might
  be used in the return of the function.
  For this, the precision of the context-sensitive CFG is necessary to
  identify these \code{MSTORE} operations as needless. As a result we
  cannot eliminate the block because it is needed in one of the
  clones, but still we can achieve an important optimization on the
  EVM code by removing the unconditional jumps to this block in the
  other two cases that would avoid completely the execution of all
  these instructions (and their corresponding gas consumption
  \cite{yellow}).
\end{example}

% Finally, let us mention that some needless accesses might originate
% from programming errors, like struct or array variables declared but
% never used, but other needless accesses might come from
% unoptimized code produced by the compiler which is not accessible to
% be improved by the smart contract developer.

%as it
%corresponds to a slot  %non-reachable as it cannot be referenced.

\secbeg
As a corollary of the soundness for the analysis of the previous
sections, we have that the inferred write accesses are needless. 

\begin{corollary}[soundness of needless write accesses inference]
  If $pp \in \writingleaks$, there is no execution of the program in
  which the slot accessed at $pp$ is read after executing the write
  instruction at $pp$.
\end{corollary}

% ;; -*- coding: iso-latin-1; TeX-PDF-mode: t; TeX-master: "main" -*-%

\secbeg
\section{Experimental Evaluation}\label{sec:experiments}
\secbeg

This section reports on the results of the experimental evaluation of
our approach, as described in Sec.~\ref{sec:mem-analysis}. All
components of the analysis are implemented in Python, are open-source,
and can be downloaded from \github where detailed instructions for its
installation and usage are
provided\footnote{\url{https://github.com/costa-group/EthIR/tree/memory_optimizer/ethir}}.
We use external components to build the CFGs (as this is not a
contribution of our work). Our analysis tool accepts smart contracts
written in versions of Solidity up to 0.8.17 and bytecode for the
Ethereum Virtual Machine v1.10.25\footnote{The latest versions
  released up to Oct 2022.}. The experiments have been performed on an
AMD Ryzen Threadripper PRO 3995WX 64-cores and \SI{512}{\giga\byte} of
memory, running Debian 5.10.70.
%
% The analysis has
% been implemented on top of the tool XYZ, used to reconstruct the
% CFG. In addition, the memory model implemented by XYZ has been
% modified, as it only recovers the constant memory addresses.
% % It takes as starting point the CFG built by the tool
% % \tname{XYZ}.
% The prototype is available at \url{XX}.
%
In order to experimentally evaluate the analysis, we pulled from
\textsf{etherscan.io}~\cite{etherscanSourceCodes} the Ethereum
contracts bound to the last 5,000 open-source verified addresses whose
source code was available on July 14, 2022. % the Ethereum
% contracts bound to the last 5,000 open-source verified accounts whose
% source code was available on July 14, 2022.
% This amounts to 23308
% Solidity files.
From those addresses, the code of 2.18\% of them raises a compilation
error from \lst{solc}. For the code bound to the \numAddresses
remaining addresses, the generation of the CFG (which is not a
contribution of this work) timeouts after 120s on \numTimeouts of
them. Removing such failing cases, we have finally
analyzed % 4723 Solidity files and
\numContracts smart contracts, as each address and each Solidity file
may contain several contracts in it.
Note that 84.86\% of the contracts are compiled with
the  \lst{solc} version 0.8, presumably with the most advanced
compilation techniques.
The whole dataset used will be
found at the above \github link.

In order to be in a worst-case scenario for us, we run the memory
analysis after executing the \texttt{solc} optimizer, i.e, we analyze
bytecode whose memory usage may have been optimized already by the
optimizer available in \texttt{solc}. This will allow us also to see
if we can achieve further optimization with our approach.
Unfortunately, we have not been able to apply our tool after running
the super-optimizer GASOL~\cite{AlbertGHR22}, because it does not generate the
optimized bytecode but rather it only reports on the gas and/or size
gains for each of the blocks. Nevertheless, a detailed comparison of
the techniques that GASOL applies and ours is given in
Sec.~\ref{sec:concl-relat-work}, where we justify that GASOL will not
find any of our needless accesses.
%
% change{ From the analyzed smart contracts, 84.86\% are compiled with
 %  \lst{solc} version 0.8.}
% 0.88\% have been deployed with the
%version 0.4 of the compiler, 0.98\% with version 0.5, 7.3\% with
%version 0.6, 3.68\% with version 0.7\% and 87.16\% with version 0.8.
%
From the \numContracts analyzed contracts, the analysis infers
\numAbsSlots abstract memory slots and detects \needlessAccess
needless write memory accesses in \totalTime. These needless accesses
occur within the code bound to 780 different addresses, i.e., 15.95\%
of the analyzed ones. %%
The following table contains the most common public functions with
needless memory accesses.

  \begin{center}
{\small
  \begin{tabular}{llllll}
    \hline
    \multicolumn{1}{|l|}{\bf Function} & \multicolumn{1}{l|}{\bf
                                          \#Acc} &
                                                               \multicolumn{1}{l|}{\bf \#C} &  \multicolumn{1}{||l|}{\bf Function} & \multicolumn{1}{l|}{\bf
                                          \#Acc} &
                                                               \multicolumn{1}{l|}{\bf \#C}  \\ \hline

    \multicolumn{1}{|l|}{\lst{transfer}} &
                                                             \multicolumn{1}{l|}{1745}
                                                                             &
                                                                               \multicolumn{1}{l|}{441}
                                                                                                     &      \multicolumn{1}{||l|}{\lst{release}} &
                                                              \multicolumn{1}{l|}{68} & \multicolumn{1}{l|}{24}  \\ \hline
        \multicolumn{1}{|l|}{\lst{transferFrom}} & \multicolumn{1}{l|}{1736} & \multicolumn{1}{l|}{439} &      \multicolumn{1}{||l|}{\lst{poke}} &
                                          \multicolumn{1}{l|}{60} &
                                                                    \multicolumn{1}{l|}{1}  \\ \hline
        \multicolumn{1}{|l|}{\lst{safeTransferFrom}} & \multicolumn{1}{l|}{162} & \multicolumn{1}{l|}{52} &     \multicolumn{1}{||l|}{\lst{withdraw}} &
                                          \multicolumn{1}{l|}{54} &
                                                                    \multicolumn{1}{l|}{32}  \\ \hline

    \multicolumn{1}{|l|}{\lst{reflectionFromToken}} &
                                                       \multicolumn{1}{l|}{105} & \multicolumn{1}{l|}{6} &     \multicolumn{1}{||l|}{\lst{claimDividends}} &
                                          \multicolumn{1}{l|}{54} &
                                                                    \multicolumn{1}{l|}{10}
    \\ \hline
    \multicolumn{1}{|l|}{\lst{onERC1155BatchReceived}} &
                                                          \multicolumn{1}{l|}{97} & \multicolumn{1}{l|}{13} &     \multicolumn{1}{||l|}{\lst{onERC1155Received}} &
                                          \multicolumn{1}{l|}{48} &
                                                                    \multicolumn{1}{l|}{12}
    \\ \hline
\end{tabular}
}
\end{center}
Column \textbf{\#Acc} shows the number of needless accesses identified
in each function and column \textbf{\#C} the number of different
contracts that contain these functions. Some of them such as
\lst{transferFrom}, \lst{transfer}, \lst{reflectionFromToken} or
\lst{withdraw} are functions widely used in the implementation of
contracts based on ERC tokens. A manual inspection of these 10 public
functions has revealed two different sources for needless accesses:
some of them are due to inefficient programming practices, while
others are generated by the compiler and could be improved.
% They can be classified in 2 groups: (i) bad programming practices
% and (ii) patterns introduced by the compiler...  The neddless
% accesses of the second group come from code introduced directly by
% the compiler.
As regards compiler inefficiencies, we detected bytecode that
allocates memory slots that are inaccessible and cannot be used
because the baseref to access them is not maintained in the stack. For
example, when a struct is returned by a function, it always allocates
memory for this data. However, if the return variable is not named in
the header of the function, the compiler allocates memory for this
data although it will never be accessed. If programmers are aware of this
behavior they can avoid such generation of useless memory but, even
better, this memory usage patterns can be changed in the compiler.
% is never accessed thoughtFor instance, the compiler creates such
% kind of slots when a data structure is returned by a function but it
% is not declared in the header of the function.
For instance, it is reflected in L\ref{ex2:funchead3} and
L\ref{ex2:funchead2} in Fig.~\ref{fig:running1}, where the functions
do not name the return variable. Hence, the compiler allocates memory
for these \emph{anonymous} data structures which are never
used. Similarly, there are various situations involving external calls
in which the compiler creates memory that is never used. When there is
an external call that does not retrieve any result, the compiler
creates two memory slots, one for retrieving the result from the call,
and another one for copying a potential result to a memory variable
that is never used. Finally, the compiler also creates memory that is
never used for low-level plain calls for currency transfer.
% that do not retrieve the result.
Even though the contract code does not use the second result
returned by the low-level call, the compiler generates code for
retrieving it.
% In this particular case of a plain call that only transfers Ether,
% the result is indeed never retrieved at runtime because a plain call
% does not return any data.
All these potential optimizations have been detected by means of our
inference of needless write accesses and will be communicated to the
\texttt{solc} developers.

% \paragraph{Special Addresses.}
% Besides, some memory locations do not follow the above patterns and
% are reserved by the compiler for being used in low-level operations.
% %%
% Concretely, words at addresses $0x00$ and $0x20$ are used for
% computing hashes, and word at address $0x60$ is the \emph{zero slot},
% that contains the initial value (\textit{null}) for dynamic memory
% arrays.

% % 2 round
% solidity -> 192
% contracts -> 2351
% OK-> 168
% timeout-> 15
% potential -> 204
% contracts distintos -> 33
% time -> 502.05
% numSlots -> 26197
% 0.4 -> 0
% 0.5 -> 4
% 0.6 -> 16
% 0.7 -> 6
% 0.8 -> 142

% ;; -*- coding: iso-latin-1; TeX-PDF-mode: t; TeX-master: "main" -*-%

\section{Conclusions and Related Work}\label{sec:concl-relat-work}

We have proposed a novel memory analysis for Ethereum smart contracts
and have applied it to infer needless write memory accesses. The
application of our implementation over more than 19,000 real smart
contracts has detected some compilation patterns that introduce
needless write accesses and that can be easily changed in the compiler
to generate more efficient code.  Let us discuss related work along
two directions: (1) memory analysis and (2) memory optimization.
Regarding (1), the static modeling of the EVM memory in
\cite{GrechLTS22} has some similarities with the memory analysis
presented in Secs.~\ref{sec:allocation-analysis} and
\ref{sec:slots-analysis}, since in both cases we are seeking to model
the memory although with different applications in mind. There are
differences on one hand on the type of static analysis used in both
cases: \cite{GrechLTS22} is based on a Datalog analysis while we have
defined a standard transfer function which is used within a
flow-sensitive analysis. More importantly, there are differences on
the precision of both analyses. We can accurately model the memory
allocated by nested data structures in which the memory contains
pointers to other memory slots, while \cite{GrechLTS22} does not
capture such type of accesses.  This is fundamental to perform memory
optimization since, as shown in the running examples of the paper, it
allows  detecting needless write accesses that otherwise would be
missed. Finally, the application of the memory analysis to
optimization is not studied in \cite{GrechLTS22}, while it is the main
focus of our work.

As regards (2), optimizing memory usage is a challenging research
problem that requires to  % Indeed, most super-optimization tools leave out memory
% bytecodes (e.g., \code{MLOAD}, \code{MSTORE}, \code{SLOAD},
% \code{SSTORE}, \code{LOGX} or \code{CALLDATACOPY}). This is mainly
% because one first needs to  
precisely infer the memory positions that
are being accessed. Such positions sometimes are statically known
(e.g., when accessing the EVM free memory pointer) but, as we have seen, often a precise
and complex inference is required to figure out the slot being
accessed at each memory access bytecode. Recent work within the
super-optimizer GASOL~\cite{AlbertGHR22} is able to perform some
memory optimizations at the level of each block of the CFG (i.e.,
intra-block). % The type of
% optimizations that are detected within GASOL are defined by means of
% three rules: (i)  a read access  after a write access 
% to the same position is removed (provided there is no
% write access between them) and  the stored
% value replaces the load operation in the stack 
% %
% (ii) the first of two subsequent store instructions to the
% same position is removed if there are no read
% instructions to the same position in between; and
% %
% (iii) a write access that stores in a
% position the value loaded previously from the same position is removed
% (if
% there are no other store instruction in between).
%
There are three fundamental differences between our work and GASOL:
First, GASOL can only apply the optimizations when the memory
locations being addressed refer to the same constant
direction. In other words, there is no real memory analysis (namely
 Secs.~\ref{sec:allocation-analysis}
and~\ref{sec:slots-analysis}). Second, the optimizations are applied
only at an intra-block level and hence many optimization opportunities are
missed.  These two points make a fundamental difference with our
approach, since % the gains shown in our experimental results and
the detected optimizable patterns (see Sec.~\ref{sec:experiments}) require inter-block analysis and a
precise slot access analysis, and hence cannot be detected by GASOL.

Finally, as mentioned in Sec.~\ref{sec:introduction}, in addition to
dynamic memory, smart contracts also use a persistent memory called
storage.  %Indeed, the optimizations in GASOL can be applied both to
%memory and to storage, while our work focuses only on memory.
Regarding the application of our approach to infer needless accesses
in storage, there are two main points. First, there is no need to
develop a static analysis to detect the slots in storage, as they are
statically known (hence our inference in
Sec.~\ref{sec:allocation-analysis} and \ref{sec:slots-analysis} is not
needed), i.e., one can easily know the read and write sets of
Def.~\ref{def:rw}. Thus, the read and write sets of our analysis can
be easily defined for storage. The second point is that, as storage is
persistent memory, a write storage access is not removable even if
there is no further read access within the smart contract, as it needs
to be stored for a future transaction. The removable write storage
accesses are only those that are rewritten and not read in-between the
two write accesses. Including this in our implementation is
straightforward. However, this situation is rather unusual, and we
believe that very few cases would be found and hence little
optimization can be achieved. % Hence, since a static storage
% analysis is not needed as in the case of dynamic memory, this is
% mainly an implementation and evaluation effort that we leave as part
% of future implementation plans.

%%

\bibliographystyle{plain}
\bibliography{biblio}

\begin{thebibliography}{10}

\bibitem{bamboo}
{Bamboo}.
\newblock \url{https://github.com/pirapira/bamboo}.

\bibitem{running}
{ERC721A}.
\newblock
  \url{https://etherscan.io/address/0xfcd5c0ef90715dc052dad6de08efda758aa09f60#code}.

\bibitem{serpent}
{Serpent}.
\newblock \url{https://github.com/ethereum/wiki/wiki/Serpent}.

\bibitem{vyper}
{Vyper}.
\newblock \url{https://github.com/ethereum/vyper}.

\bibitem{etherscanSourceCodes}
Etherscan.
\newblock \url{https://etherscan.io}, 2018.

\bibitem{solidity-doc}
Solidity documentation, 2021.
\newblock \url{https://docs.soliditylang.org/en/latest/index.html}.

\bibitem{AlbertCGRR20bTR}
Elvira Albert, Jes\'us Correas, Pablo Gordillo, Alejandro Hern\'andez-Cerezo,
  Guillermo Rom\'an-D\'iez, and Albert Rubio.
\newblock {A}nalyzing {S}mart {C}ontracts: {F}rom {EVM} to a {S}ound
  {C}ontrol-{F}low {G}raph.
\newblock Technical report, 2020.

\bibitem{AlbertGHR22}
Elvira Albert, Pablo Gordillo, Alejandro Hern\'andez-Cerezo, and Albert Rubio.
\newblock {A} {M}ax-{SMT} {S}uperoptimizer for {EVM} handling {M}emory and
  {S}torage.
\newblock In Dana Fisman and Grigore Rosu, editors, {\em 28th International
  Conference on Tools and Algorithms for the Construction and Analysis of
  Systems, TACAS 2022. Proceedings}, volume 13243 of {\em Lecture Notes in
  Computer Science}, pages 201--219. Springer, 2022.

\bibitem{2020_brandstatter_et_al}
T.~Brandst{\"a}tter, S.~Schulte, J.~Cito, and M.~Borkowski.
\newblock Characterizing {{Efficiency Optimizations}} in {{Solidity Smart
  Contracts}}.
\newblock In {\em 2020 {{IEEE International Conference}} on {{Blockchain}}
  ({{Blockchain}})}, pages 281--290, 2020.

\bibitem{ChenFLZLLXCZ20}
Ting Chen, Youzheng Feng, Zihao Li, Hao Zhou, Xiapu Luo, Xiaoqi Li, Xiuzhuo
  Xiao, Jiachi Chen, and Xiaosong Zhang.
\newblock Gaschecker: Scalable analysis for discovering gas-inefficient smart
  contracts.
\newblock {\em IEEE Transactions on Emerging Topics in Computing},
  PP(99):1--14, 03 2020.

\bibitem{DBLP:conf/icse/ChenLZCLLZ18}
Ting Chen, Zihao Li, Hao Zhou, Jiachi Chen, Xiapu Luo, Xiaoqi Li, and Xiaosong
  Zhang.
\newblock Towards saving money in using smart contracts.
\newblock In {\em Proceedings of the 40th International Conference on Software
  Engineering: New Ideas and Emerging Results, {ICSE} {(NIER)} 2018,
  Gothenburg, Sweden, May 27 - June 03, 2018}, pages 81--84, 2018.

\bibitem{GaoSSLSB21}
Bo~Gao, Siyuan Shen, Ling Shi, Jiaying Li, Jun Sun, and Lei Bu.
\newblock Verification assisted gas reduction for smart contracts.
\newblock In {\em 28th Asia-Pacific Software Engineering Conference, {APSEC}
  2021, Taipei, Taiwan, December 6-9, 2021}, pages 264--274. {IEEE}, 2021.

\bibitem{gigahorse}
Neville Grech, Lexi Brent, Bernhard Scholz, and Yannis Smaragdakis.
\newblock Gigahorse: thorough, declarative decompilation of smart contracts.
\newblock In Joanne~M. Atlee, Tevfik Bultan, and Jon Whittle, editors, {\em
  Proceedings of the 41st International Conference on Software Engineering,
  {ICSE} 2019, Montreal, QC, Canada, May 25-31, 2019}, pages 1176--1186. {IEEE}
  / {ACM}, 2019.

\bibitem{GrechLTS22}
Neville Grech, Sifis Lagouvardos, Ilias Tsatiris, and Yannis Smaragdakis.
\newblock Elipmoc: Advanced decompilation of ethereum smart contracts.
\newblock {\em Proc. {ACM} Program. Lang.}, 6({OOPSLA}):77:1--77:27, 2022.

\bibitem{HajduJ20}
{\'{A}}kos Hajdu and Dejan Jovanovic.
\newblock Smt-friendly formalization of the solidity memory model.
\newblock In Peter M{\"{u}}ller, editor, {\em Programming Languages and Systems
  - 29th European Symposium on Programming, {ESOP} 2020, Held as Part of the
  European Joint Conferences on Theory and Practice of Software, {ETAPS} 2020,
  Dublin, Ireland, April 25-30, 2020, Proceedings}, volume 12075 of {\em
  Lecture Notes in Computer Science}, pages 224--250. Springer, 2020.

\bibitem{LagouvardosGTS20}
Sifis Lagouvardos, Neville Grech, Ilias Tsatiris, and Yannis Smaragdakis.
\newblock Precise static modeling of ethereum "memory".
\newblock {\em Proc. {ACM} Program. Lang.}, 4({OOPSLA}):190:1--190:26, 2020.

\bibitem{evmBook}
Mayukh Mukhopadhyay.
\newblock {\em Ethereum Smart Contract Development}.
\newblock Packt publishing, 2018.

\bibitem{ebso}
Julian Nagele and Maria~A Schett.
\newblock Blockchain superoptimizer.
\newblock In {\em Preproceedings of 29th International Symposium on Logic-based
  Program Synthesis and Transformation (LOPSTR 2019)}, 2019.

\bibitem{DBLP:books/daglib/0098888}
F.~Nielson, H.~R. Nielson, and C.~Hankin.
\newblock {\em Principles of Program Analysis}.
\newblock Springer, 1999.

\bibitem{2020_schett_et_al}
Maria~A. Schett and Julian Nagele.
\newblock {Populating the Peephole Optimizer of a Smart Contract Compiler}.
\newblock In Bruno Bernardo and Diego Marmsoler, editors, {\em 2nd Workshop on
  Formal Methods for Blockchains (FMBC 2020)}, volume~84 of {\em OpenAccess
  Series in Informatics (OASIcs)}, pages 3:1--3:15. Schloss
  Dagstuhl--Leibniz-Zentrum f{\"u}r Informatik, 2020.

\bibitem{SchneidewindGSM20}
Clara Schneidewind, Ilya Grishchenko, Markus Scherer, and Matteo Maffei.
\newblock {eThor}: Practical and provably sound static analysis of ethereum
  smart contracts.
\newblock In Jay Ligatti, Xinming Ou, Jonathan Katz, and Giovanni Vigna,
  editors, {\em {CCS} '20: 2020 {ACM} {SIGSAC} Conference on Computer and
  Communications Security, USA, November 9-13, 2020}, pages 621--640. {ACM},
  2020.

\bibitem{TsankovDDGBV18}
Petar Tsankov, Andrei~Marian Dan, Dana Drachsler{-}Cohen, Arthur Gervais,
  Florian B{\"{u}}nzli, and Martin~T. Vechev.
\newblock {Securify: Practical Security Analysis of Smart Contracts}.
\newblock In David Lie, Mohammad Mannan, Michael Backes, and XiaoFeng Wang,
  editors, {\em Proceedings of the 2018 {ACM} {SIGSAC} Conference on Computer
  and Communications Security, {CCS} 2018, Toronto, ON, Canada, October 15-19,
  2018}, pages 67--82. {ACM}, 2018.

\bibitem{yellow}
Gavin Wood.
\newblock Ethereum: A secure decentralised generalised transaction ledger,
  2019.

\end{thebibliography}

\newpage

\appendix

% ;; -*- coding: iso-latin-1; TeX-PDF-mode: t; TeX-master: "technical-report" -*-%

\section{Proofs}\label{ap:proofs}

\subsection{Concrete Slot and Sound Abstraction}

We use as premise the description regarding the code generated by the
Solidity compiler at:
\url{https://docs.soliditylang.org/en/v0.8.17/internals/layout_in_memory.html}
whose main points are the following:
\begin{enumerate}
\item\label{assumption1} All memory allocations are performed using
  the same pattern: one or several instructions \code{MLOAD 0x40} that
  read the free memory pointer, followed by a sequence of instructions
  that finishes with an instruction that belongs to the set
  $\closing = \{\mcode{MSTORE 0x40, RETURN, REVERT, STOP,
    SELFDESTRUCT}\}$.

\item\label{assumption2} The free memory pointer is assigned an
  ever-increasing value in any execution of an EVM program.

\item\label{assumption3} All memory locations greater than \code{0x60}
  are accessed by computing the memory address using the free memory
  pointer at address \code{0x40} plus an optional offset greater or
  equal to zero. The offset is always smaller than the slot length.
\end{enumerate}

A concrete slot is identified by the accesses to the free memory
pointer for retrieving its base reference, i.e., the subsequence of
the program points in the execution trace that contain \code{MLOAD
  0x40} instructions reading the same value of the free memory
pointer. Concrete slots are abstracted by the set of program points of
these instructions.

The starting point of the analysis is a \emph{complete} and
\emph{sound} stack-sensitive control-flow-graph (CFG) that allow us to
represent all possible execution paths of a given EVM program. The
context-sensitivity is realized by cloning the blocks which are
reached from different contexts (state of the execution
stack). See~\cite{AlbertCGRR20bTR} for more details.

Let us now formalize the notion of concrete slot.

Given an execution trace $t\equiv t_0 \rightarrow^{*} t_n$, we use
$inst(t_k)$ to refer to the instruction of the form $pp{:}I$ executed
at $t_k$, and $stack(t_k,pos)$ to refer to the contents of the
position $pos$ in the stack after executing $t_k$. We use
$stack(t_k,top)$ to refer to the top of the stack, $stack(t_k,top-1)$
to refer to the element next to the top, and so on. Finally, we use
$freeptr(t_k)$ to refer to the value of the free memory pointer after
the execution of the instruction at $t_k$. We consider that
$freeptr(t_k)$ returns $-1$ when $inst(t_k) \in
\{\mcode{RETURN}, \mcode{REVERT},$ $\mcode{STOP},
\mcode{SELFDESTRUCT}\}$, any instruction that terminates
the execution of the EVM program.

\begin{definition}[concrete slot]
  Let us consider a program $P$ and a concrete execution trace from an
  initial state $t_0$ of the form $t \equiv t_0 \rightarrow^{*} t_n$
  with $n \geq 0$. We define $slots(t)$ as the set of distinct values
  $\{s_0,\ldots,s_k\}$ obtained from the execution of \code{MLOAD 0x40} instructions in
  $t$:
\[
  slots(t) = \{s | s \equiv stack(t_i,top) \wedge t_i\in t \wedge t_i\equiv pp{:}\mcode{MLOAD 0x40}\}.
  \]
  %% 
  %% GRD: Creo que esto no es necesario
  % and $pp_0,\ldots,pp_k$
  % the program points of  MLOAD 0x40 instructions  at which each of the
  % respective values are first read in $t$. 
  % We define $slots(t)$ as the set of unique identifiers
  % $s_0,\ldots,s_k$ and map each $s_i\mapsto (v_i,pp_i)$ with
  % $i=0,\ldots,k$. Each $s_i$ is
  % known as a concrete slot. 
\end{definition}

According to Assumptions~\ref{assumption1} and~\ref{assumption2},
whenever a transient slot is made permanent, the free memory pointer
is pushed forward and a different greater value is assigned to
\mem{0x40}.

\begin{definition}[slot loading instructions]
  Given a concrete execution trace $t\equiv t_0 \rightarrow^{*} t_n$
  with $n \geq 0$ and a concrete slot $s\in slots(t)$,
  we define the loading instructions of $s$ in $t$ as the multiset of
  program points defined as
  \[
  loading(s) = \{pp ~|~ t_k \in t \wedge inst(t_k) \equiv
  pp{:}\mcode{MLOAD 0x40} \wedge stack(t_k,top) \equiv s\}.
  \]
\end{definition}

\begin{definition}[abstract slot]\label{def:abstract-slot}
  Given a concrete execution trace from an initial state $t_0$ of the
  form $t \equiv t_0 \rightarrow^{*} t_n$ with $n \geq 0$ and a slot
  $s_i$, we define $\alpha(s) = \{pp ~|~ pp \in loading(s)\}$.
\end{definition}

% As defined in Def.~\ref{sec:slot-analysis} an abstract slot is a set
% of program points. We say that an abstract slot $S$ is an abstraction
% of a concrete slot $s\mapsto (v,pp)$, written as $\alpha(s)=S$ if
% $pp \in S$.

\subsection{Proof of soundness of slot analysis}

\newcommand{\eqs}{\ensuremath{\mathcal{E}}}
\newcommand{\eq}{\ensuremath{\mathcal{X}}}

\begin{definition}[EVM analysis equations system]\label{def:equation-system}
  Given a program $P$, its stack-sensitive control-flow-graph $CFG$
  with all jumping instructions labeled with its potential
  destination blocks, a transfer function of the form $\rho(b,AS)$
  where $b$ is an EVM instruction and $AS$ an abstract state. The
  analysis equations system $\mathcal{E}(P)$ includes the following
  equations:

\[
\begin{array}{rllll}
  & pp{:}I & \text{Equations} \\ 
\hline
(1) & \mcode{JUMP/I DEST}~~~~ & \eq_{d} \sqsupseteq \eq_{pp-1} ~~~~~~~~~~~~ & \forall~d \in \mcode{DEST} \\ 
(2) & otherwise  & \eq_{pp} \sqsupseteq \rho(I,\eq_{pp-1}) ~~~~~~~~~~~~ & \\
\end{array}
\]

\end{definition}

The destination set \mcode{DEST} in Equation (1) contains one or two
destination program points depending on the instruction at $pp$.

When the constraint equation system is solved, constraint variables
over-approximate the points-to information for the program.  In particular,
variables of the form $\eq_{pp}$ store the slot information after the
execution of the instruction at $pp$.
% %
Solving such system can be done iteratively. A na\"{\i}ve algorithm consists in
first initializing each constraint variable to $\{\emptyset\}$, and then
iteratively refining the values of these variables as follows:

\begin{enumerate}
\item substitute the current values of the constraint variables in
  the right-hand side of each constraint, and then evaluate the right-hand 
  side if needed;

\item if each constraint $\eq \sqsupseteq E$ holds, where $E$ is the value
  of the evaluation of the right-hand side of the previous step, then
  the process finishes; otherwise

\item for each $\eq \sqsupseteq E$ which does not hold, let $E'$ be the
  current value of $\eq$. Then update the current value of $\eq$ to $E
  \lub E'$. Once all these updates are (iteratively) applied we repeat
  the process at  step 1.
\end{enumerate}
Termination is guaranteed since the abstract domains used in the
analyses do not have
infinitely ascending chains. 

%%JCF% \fixme{ByJ: no se si vamos a necesitar el lema. Intentamos primero el
%%JCF%   teorema}
%%JCF% \begin{lemma}
%%JCF% Given a program point $pp$, 
%%JCF% % 
%%JCF% $get\_slots(pp) \supseteq \{\alpha(s) ~|~ s \in slots(t) \wedge pp \in loading(s)\}$ 
%%JCF% %
%%JCF% returns all possible slots loaded at $pp$ that might be created by any
%%JCF% concrete execution trace $t$ at $pp$.
%%JCF% \end{lemma}
%%JCF% 
%%JCF% \begin{proof}
%%JCF% \end{proof}

\begin{theorem}[soundness of slot analysis]\label{th:slots-analysis}
  Let us consider a program $P$ and the set of allocated abstract
  slots $\closed$ computed for $P$. For any trace
  $t \equiv t_0 \rightarrow^{*} t_n \in P$, we have that for all
  $s \in slots(t)$, there exists $a \in \closed$ such that $\alpha(s)
  \subseteq a$.
\end{theorem}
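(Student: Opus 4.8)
The plan is to follow the standard recipe for proving soundness of an abstract interpretation: relate the concrete execution to the least fixpoint $\opened_{(\cdot)}$ of the equation system of Def.~\ref{def:equation-system} through an inductive invariant, and then read off $\closed$ at the closing instructions. First I would use Assumptions~\ref{assumption1} and~\ref{assumption2} to pin down the concrete structure: because the free memory pointer is ever-increasing, each value $s \in slots(t)$ appears as $freeptr$ on a single \emph{contiguous} sub-interval of the trace, and during that interval every $\mcode{MLOAD 0x40}$ reads exactly $s$. Hence $loading(s)$, and therefore $\alpha(s)$, is precisely the set of program points of the $\mcode{MLOAD 0x40}$ instructions executed on that interval, and by Assumption~\ref{assumption1} the interval is closed off by an instruction in $\closing$ (an $\mcode{MSTORE 0x40}$ that pushes the pointer forward, or a halting instruction) --- unless the trace stops in the middle of the interval.

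Next I would prove the key invariant by induction on the length of the prefix $t_0 \rightarrow^{*} t_k$: writing $f = freeptr(t_k)$ for the current non-permanent pointer value and $\ell_k = \{pp \mid \exists j \le k,\ inst(t_j)\equiv pp{:}\mcode{MLOAD 0x40} \wedge stack(t_j,top)=f\}$ for the program points at which $f$ has been loaded so far, there exists $a \in \opened_{pp}$ (the fixpoint value after the instruction at $t_k$) with $\ell_k \subseteq a$. The inductive step is a case analysis mirroring the three cases of the transfer function $\nu$ of Def.~\ref{sec:slots-analysis-1}: on $\mcode{MLOAD 0x40}$ the concrete set gains $pp$ while $\nu$ adds $pp$ to every element of $\opened$, preserving the witnessing $a$; on an instruction in $\closing$ the pointer is pushed forward (or execution halts), a fresh slot with empty loading set begins, and $\nu$ resets $\opened$ to $\set{\emptyset}$; in all other cases both sides are unchanged. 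Because the analysis joins predecessors with $\lub = \cup$ and orders states by $\sqsubseteq\, =\, \subseteq$, passing through a CFG merge can only enlarge $\opened$, so the witness $a$ survives joins; and since $\opened_{(\cdot)}$ is the least solution of the equation system, it over-approximates the abstract state computed along the particular path taken by $t$.

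Then I would conclude. When the interval of a slot $s$ ends, within $t$, at a closing instruction $pp{:}I$ with $I \in \closing$, the invariant applied just before $pp$ gives $a \in \opened_{pp-1}$ with $\alpha(s) \subseteq a$; since $pp \in \ppmstore$, Def.~\ref{sec:slot-analysis} yields $a \in \closed$, as required. For the boundary slot whose interval is cut off by the end of the trace, I would extend the reasoning along the CFG: from $inst(t_n)$ every path reaches some instruction of $\closing$ (programs terminate and the CFG is complete and sound), and along any such path, before that closing instruction, the transfer function only \emph{grows} the elements of $\opened$ via case~(1) of $\nu$ without resetting; hence the witness $a$ is contained in some $a' \in \opened_{pp'-1} \subseteq \closed$ for the reached closing instruction $pp'$, and $\alpha(s) \subseteq a \subseteq a'$ still holds.

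I expect the main obstacle to be the interaction between the set-of-sets abstract domain and the control-flow joins: I must argue that the loading program points of a \emph{single} concrete slot are always captured inside one element $a$ of $\opened$ rather than being scattered across several elements, and that this single witness is preserved both by the accumulation of case~(1) and across every $\lub$ on the path --- this is exactly what lets the final $a \in \closed$ contain all of $\alpha(s)$ at once. The secondary delicate point is the boundary-slot argument, which relies on the fact (from Assumption~\ref{assumption1}) that no new free-pointer value is introduced between a slot's loads and the closing instruction that makes it permanent, so that the ``otherwise'' transfer case genuinely leaves $\opened$ untouched over the whole segment.
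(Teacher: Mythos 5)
Your proposal is correct and follows essentially the same route as the paper: your inductive invariant (the loads $\ell_k$ of the current $freeptr$ value are contained in a single witness $a \in \opened_{pp}$, proved by the three-case analysis on $\nu$) is exactly the paper's Lemma~\ref{lemma2}, and your concluding step --- reading off $\closed$ just before instructions in $\closing$ and pushing the witness along non-closing edges, where $\opened$ only grows, to handle program points (and the truncated boundary slot) not immediately preceding a closing instruction --- is the paper's $\preceq$-monotonicity argument showing $\eq_{pp} \preceq \closed$ everywhere. If anything, you are slightly more explicit than the paper about the contiguous-interval structure of a concrete slot (via the ever-increasing pointer) and about the trace ending mid-reservation, but these are refinements of the same proof, not a different one.
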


The slot analysis results are collected in $\closed$ at some program
points, namely the ones just before the instructions in $\closing$. We
base our proof for Theorem~\ref{th:slots-analysis} on the following
lemma that states that $\opened_{pp}$ is a sound over-approximation of
the transient slot pointed by the free memory pointer. 

\begin{lemma}
  \label{lemma2}
  Let us consider a program $P$ and $\eq_{pp}$ for each program point
  $pp$ in $P$, a solution to the equations system of
  Def.~\ref{def:equation-system} using transfer function $\nu$. For
  any trace $t \equiv t_0 \rightarrow^{*} t_n \in P$, we have that for
  each $t_k, 0 \leq k \leq n$, with $s \equiv freeptr(t_k)$ and
  $pp{:}I \equiv inst(t_k)$, there exists $a \in \eq_{pp}$ such that
  $\alpha(s) \subseteq a$.
\end{lemma}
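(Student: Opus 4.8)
The plan is to prove Lemma~\ref{lemma2} by induction on the length $k$ of the trace prefix $t_0 \rightarrow^{*} t_k$, after strengthening it to an invariant that is genuinely inductive. The natural strengthening records, for the \emph{current} value of the free memory pointer, the set of \mcode{MLOAD 0x40} program points that have already loaded it: concretely, I would show that for every $k$ with $pp{:}I \equiv inst(t_k)$ and $s \equiv freeptr(t_k)$ there is a component $a \in \eq_{pp}$ with $\{pp' \mid \exists j \le k,\ inst(t_j) \equiv pp'{:}\mcode{MLOAD 0x40} \wedge stack(t_j,top) = s\} \subseteq a$. By Assumption~\ref{assumption2} the free pointer is ever-increasing, so once a slot is closed its baseref value never recurs; hence at the step immediately preceding a closing instruction this accumulated set is exactly $\alpha(s)$, which is the form stated in Lemma~\ref{lemma2} and the only form Theorem~\ref{th:slots-analysis} consumes.

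For the base case no \mcode{MLOAD 0x40} has yet loaded $s$, so the accumulated set is empty and $\emptyset \in \eq_{pp}$ holds because the solver initialises every constraint variable to $\set{\emptyset}$. For the inductive step I would split on the instruction executed at $t_k$, mirroring the three cases of $\nu$ and the two equation shapes of Def.~\ref{def:equation-system}. If $I$ is \mcode{MLOAD 0x40} (case~1 of $\nu$), the pointer is unchanged and $pp$ must be added to the tracking set; case~1 computes exactly $\{s' \cup \{pp\} \mid s' \in \opened\}$, so the new $pp$ lands in the component that by induction already contained the earlier loadings of $s$. If $I \in \closing$, then either execution halts or (for \mcode{MSTORE 0x40}) the pointer is pushed forward to a fresh larger value by Assumptions~\ref{assumption1}--\ref{assumption2}; case~2 resets the state to $\set{\emptyset}$, which soundly tracks the new slot since no loading of it can have occurred yet. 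For every other instruction (case~3) the pointer and, by Assumption~\ref{assumption3}, the identity of the open slot are unchanged and $\nu$ leaves $\opened$ untouched, so the invariant is preserved verbatim.

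The step needing the most care is relating a single concrete trace to the fixpoint solution $\eq_{pp}$ of the whole equation system, since between $t_{k-1}$ and $t_k$ control may follow a jump. Here I would invoke the completeness and soundness of the stack-sensitive CFG to guarantee that the concrete transition corresponds to a CFG edge, so that the relevant constraint is either $\eq_{pp} \sqsupseteq \nu(I,\eq_{pp-1})$ or, at a jump, $\eq_{d} \sqsupseteq \eq_{pp-1}$ for the taken destination $d$; in both cases the solution dominates the transferred state, and since $\lub$ is set union the containment of the accumulated set in some $a$ is preserved when several predecessors are merged at a join (merging only enlarges the witnessing component). The genuinely delicate point is arguing that the witness component chosen at $t_{k-1}$ survives as the \emph{same} tracked set after the transfer and after any join; this follows from the fact that $\nu$ never deletes a program point from a component while the slot is open, together with the ever-increasing-pointer assumption, which prevents two distinct concrete slots from being identified with one another. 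Applying the strengthened invariant at the step just before each closing instruction, together with $\closed = \bigcup_{pp \in \ppmstore} \opened_{pp-1}$, then yields Theorem~\ref{th:slots-analysis}.
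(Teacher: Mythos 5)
Your proposal is correct and takes essentially the same route as the paper's proof: induction on trace length with a case split mirroring the three cases of $\nu$ (accumulate $pp$ on \mcode{MLOAD 0x40}, reset to $\set{\emptyset}$ on $\closing$ using the ever-increasing free-pointer assumption, propagate otherwise). Your explicit strengthening to the prefix-accumulated loading set and your explicit treatment of jump edges and joins via the constraints $\eq_{d} \sqsupseteq \eq_{pp-1}$ merely make rigorous what the paper's proof does implicitly (it silently reads $\alpha(s)$ relative to the trace prefix and elides the control-flow step), so no substantive divergence or gap remains.
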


\begin{proof}
We can reason by induction on
the value of $n$, the length of the traces $t_0 \rightarrow^* t_n$.

\noindent {\bf Base Case:} If $n=0$ then $slots(t)$ is empty and
Lemma~\ref{lemma2} trivially holds.

\medskip

\noindent {\bf Induction Hypothesis:}
We assume that Lemma~\ref{lemma2} holds for all traces of length
$m \leq n$ with $n \geq 0$: for any trace $t_0 \rightarrow^m t_m$,
with $inst(t_m)\equiv pp{:}I$ and $freeptr(t_m) \equiv s$, there
exists $a \in \eq_{pp}$ such that $\alpha(s) \subseteq a$.

\noindent {\bf Inductive Case:} Let us consider traces
$t_0 \rightarrow \cdots \rightarrow t_n \rightarrow t_{n+1}$ of length
$n+1 > 0$.
To extend the theorem to traces of length $n+1$ we reason for all
possible cases of the transfer function $\nu$ of
Def.~\ref{sec:slots-analysis-1} as follows.

\begin{enumerate}
\item[(1)] {\bf $inst(t_{n+1})\equiv pp{:}\mcode{MLOAD 0x40}$}.\\
  In this case the equation considered is
  \begin{equation}
    \label{eq:lemma2-1}
    \eq_{pp} \sqsupseteq \{s \cup \{pp\} ~|~ s \in \eq_{pp-1}\} 
  \end{equation}
  where $pp-1$ is the program point of the instruction executed
  at $t_n$. % \fixme{OJO: hay casos en los que pp-1 NO es la instruccion
    % anterior. Podemos usar pred(pp)}
  
  By the induction hypothesis, given $s \equiv freeptr(t_n)$, at step
  $t_n$ there exists a set $a \in \eq_{pp-1}$ such that $\alpha(s)
  \subseteq a$.
  Since the instruction executed at $t_{n+1}$ is \code{MLOAD 0x40},
  the free memory pointer remains unchanged w.r.t. $t_n$ and,
  according to Def.~\ref{def:abstract-slot}, $\alpha(s)$ will contain
  the program points accumulated up to $t_n$, plus the program point
  $pp$. Given the Equation~\ref{eq:lemma2-1}, it is easy to see that
  $\alpha(s) \subseteq \eq_{pp}$ holds. 

\item[(2)] {\bf $inst(t_{n+1}) \in \closing$}.\\
  In this case the equation considered is
  \begin{equation}
    \label{eq:lemma2-2}
    \eq_{pp} \sqsupseteq \{\emptyset\} 
  \end{equation}
  Depending on the instruction executed at $t_{n+1}$ there are two
  sub-cases:
  \begin{itemize}
  \item $inst(t_{n+1}) \equiv \mcode{MSTORE 0x40}$. According to
    Assumption~2 detailed at the beginning of this section, the free
    memory pointer is assigned an ever-increasing value in any
    execution of an EVM program. That means that none of the \code{MLOAD
      0x40} instructions in $t_0 \rightarrow^{n+1}
    t_{n+1}$ can load the value $s\equiv freeptr(t_{n+1})$ and thus
    $\alpha(s)$ is empty and   $\alpha(s) \subseteq \eq_{pp}$
    trivially holds.
  \item $inst(t_{n+1})$ is an execution termination instruction of the
    set $\{\mcode{RETURN}, \mcode{REVERT},$
    $\mcode{STOP}, \mcode{SELFDESTRUCT}\}$. In that case,
    $s\equiv freeptr(t_{n+1})$ returns $-1$, $\alpha(s)$ is also
    empty and $\alpha(s) \subseteq \eq_{pp}$.
  \end{itemize}

\item[(3)] {\bf $inst(t_{n+1}) \not\in \closing\cup \{\mcode{MLOAD
      0x40}\}$}.\\
  Lemma~\ref{lemma2} trivially holds in this case, as the equation
  applied just propagates the abstract state from the previous
  instruction executed in any trace.
\end{enumerate}

\end{proof}

\noindent The proof for Theorem~\ref{th:slots-analysis} is sketched in the
following points. We use $pred(pp)$ to refer to the set of
program points that can precede $pp$ in an execution trace.

\begin{itemize}
\item We define the relational operator $\preceq$ on the abstract
  domain as follows: $A \preceq B$ holds iff for all $a \in A$, there
  exists $b \in B$ such that $a \subseteq b$.

\item Given a solution to the equations systems in
  Def.~\ref{def:equation-system}, it is easy to prove that $\eq_{pp'} \preceq \eq_{pp}$, with
  $pp'\in pred(pp)$, holds for every program point $pp{:}I \in P$ such
  that $I \not\equiv \mcode{MSTORE 0x40}$.

\item $\closed$, defined as
  $\bigcup_{pp \in \ppmstore} \eq_{pp-1}$, where $\ppmstore$ is
  the set of all program points $pp{:}I$ where $I{\in} \closing$,
  collects the abstract states containing the maximal sets of program
  points. 
  
\item Therefore, $\eq_{pp} \preceq \closed$ for every program point
  $pp{:}I \in P$.
\end{itemize}

\noindent It is straightforward to prove that
Theorem~\ref{th:slots-analysis} follows from the last point and
Lemma~\ref{lemma2}.

\subsection{Proof of soundness of slot access analysis}

% \begin{theorem}[soundness of slot access analysis]
% $\pi_{pp}$  is a sound \\ over-approximation of the slots accessed at
% instruction $pp$.
% \end{theorem}

We first define the property we aim at approximating, that is, the notion of
\emph{concrete slot access}:

\begin{definition}[concrete slot accesses]
  Given a concrete execution trace $t\equiv t_0 \rightarrow^{*} t_n$ with $n
  \geq 0$, a step of the form $inst(t_k) \equiv pp{:}I$ and
  a concrete slot $s$. We define $is\_accesed(s,pp,pos)$, which returns true if
  $s$ is accessed at $pp$ by means of the stack position $pos$. We define 
  \[
  slots\_accessed(pp,pos) = \{s ~|~ t_k \in t \wedge inst(t_k) \equiv pp{:}I \wedge is\_accessed(s,pp,pos)\}
  \]
\end{definition}

The slot access analysis starts by solving the equations system
defined at Def.~\ref{def:equation-system} such that the abstract state
AS is the \emph{memory analysis abstract state} defined at
Def.~\ref{def:memory-analysis-as} and $\rho$ is the \emph{memory
  analysis transfer function} $\tau$ defined at
Def.~\ref{def:transfer-accesses}.

The least solution of the equations system is a safe approximation of
the concrete slots accessed in a program point and we state it by
means of the following theorem.
%%
%We use $\pi^0_{pp}$ to refer to the slot access analysis abstract state before
%executing the instruction at $pp$ and $\pi_{pp}$ to refer to the abstract state
%after executing the instruction at $pp$.
%%
Without loss of generality, we assume that the analysis returns a
mapping $\pi(x)$ that stores sets of program points contained in
$\closed$ instead of a set of identifiers.

\begin{theorem}[soundness of slot access analysis]
  \label{th:slot-accesses}
  Given any concrete execution trace
  $t \equiv t_0 \rightarrow^{*} t_n$, a step of the form
  $inst(t_k) \equiv pp{:}I$, a stack position $pos$, a concrete slot
  $s$ such that $s \in slots\_accessed(pp,pos)$ and the slot access
  analysis results $\pi$, we have that there exists
  $a \in \eq_{pred(pp)}(pos)$ such that $\alpha(s) \subseteq a$.
\end{theorem}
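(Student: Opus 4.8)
The plan is to strengthen the statement into a \emph{step-indexed simulation invariant} and prove it by induction on the trace length, reusing the equation-system machinery of Def.~\ref{def:equation-system} (now instantiated with $\rho=\tau$, the transfer function of Def.~\ref{def:transfer-accesses}) exactly as in the proof of Theorem~\ref{th:slots-analysis}. The bare statement is not directly amenable to induction because $\tau$ propagates slot baserefs through \emph{two} channels at once --- the operand stack and the memory cells that hold pointers to other slots (the nested-structure case, cf.\ Ex.~\ref{ex:loops}) --- so the hypothesis must track both. Fixing a solution $\eq$ of the system, I would prove that for every trace $t_0 \rightarrow^{*} t_n$ and every step $t_k$ with $inst(t_k) \equiv pp{:}I$: \textbf{(a)} for every stack position $pos$ whose content $stack(t_k,pos)$ is an address lying inside a concrete slot $s$ (its \baseref possibly incremented by an in-range offset), there is $a \in \eq_{pp}(pos)$ with $\alpha(s) \subseteq a$; and \textbf{(b)} for every concrete slot $s'$ and every location inside $s'$ holding an address inside a concrete slot $s$, taking the abstract slot $a'$ with $\alpha(s')\subseteq a'$ provided by Theorem~\ref{th:slots-analysis}, there is $a \in \eq_{pp}(get\_id(a'))$ with $\alpha(s)\subseteq a$.

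The base case ($n=0$) holds vacuously. For the inductive step I would case-split on $I=inst(t_{n+1})$ along Def.~\ref{def:transfer-accesses}. The $\smcode{MLOAD 0x40}$ case pushes a fresh \baseref and is discharged by Theorem~\ref{th:slots-analysis}: since $pp \in loading(s) \subseteq \alpha(s)$ for the loaded $s$, the set $\getslot(pp)$ necessarily contains an abstract slot $a$ with $\alpha(s)\subseteq a$, giving \textbf{(a)} for the new top. The $\smcode{SWAP}i$, $\smcode{DUP}i$ and the \emph{otherwise} (operand-dropping) cases merely permute, copy or pop stack entries, so \textbf{(a)} follows by re-indexing and \textbf{(b)} is untouched. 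The two substantial cases are the memory ones: for $\smcode{MLOAD}$, if the top holds an address inside $s'$ and the read cell holds an address inside $s$, then \textbf{(b)} at $t_n$ combined with the rule $\pi(t)\mapsto\{m \mid s'\in\pi(t)\wedge m\in\pi(s')\}$ re-establishes \textbf{(a)} for the new top; for $\smcode{MSTORE}$, writing the value at $t{-}1$ into a cell of $s'$ converts instance \textbf{(a)} at position $t{-}1$ into instance \textbf{(b)} for $s'$, which is matched precisely by the \emph{weak} update $\pi(s')\mapsto\pi(s')\cup\pi(t{-}1)$.

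It remains to lift the invariant across control flow and to conclude. At a block entry the incoming state of $pp$ is $\eq_{pred(pp)}$, and the edge inequalities of Def.~\ref{def:equation-system} give $\eq_{pred(pp)}\sqsupseteq\eq_{pp'}$ for each executed predecessor $pp'$; since $\lub$ is componentwise union, every witness $a$ present at $\eq_{pp'}$ survives (possibly enlarged) in $\eq_{pred(pp)}$, so \textbf{(a)} and \textbf{(b)} carry over to the input state of the next step, preserving the induction through joins. The theorem then drops out: if $s\in slots\_accessed(pp,pos)$ then, immediately before executing $I$, the operand at $pos$ addresses $s$, so applying \textbf{(a)} at the predecessor step together with the edge inequality yields $a\in\eq_{pred(pp)}(pos)$ with $\alpha(s)\subseteq a$.

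I expect two points to carry the real weight. The first is part \textbf{(b)} in the presence of \emph{summarizing} abstract slots (e.g.\ slots allocated inside a loop, as in Ex.~\ref{ex:loops}), where a single identifier $get\_id(a')$ and a single entry $\eq_{pp}(get\_id(a'))$ stand for many concrete slots and collapse all of their cells; here it is essential that $\smcode{MSTORE}$ performs a weak (union) rather than a strong update, so that a write through one concrete representative never discards the over-approximation owed to the others. The second is soundly tracking addresses of the form $\mathit{baseref}+\mathit{offset}$ through the elided arithmetic cases of $\tau$: to keep \textbf{(a)} phrased over in-range addresses rather than bare baserefs, these cases must propagate the slot set of the \baseref operand, which is justified by Assumption~3 of the slot-analysis proof (every offset is strictly smaller than the slot length, so $\mathit{baseref}+\mathit{offset}$ stays within the same concrete slot).
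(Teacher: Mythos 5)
Your proposal is correct and takes essentially the same route as the paper's proof: induction on the trace length with a case analysis following the cases of $\tau$ in Def.~\ref{def:transfer-accesses}, discharging \mcode{MLOAD 0x40} via the soundness of the slot analysis (Theorem~\ref{th:slots-analysis}), treating \mcode{MLOAD} and \mcode{MSTORE} (with its weak update) as the only substantial cases, and dispatching \mcode{SWAP}/\mcode{DUP}/otherwise trivially. Your explicit two-part invariant --- tracking baserefs both in stack positions and in memory cells of other slots --- and your remark that the elided arithmetic cases must propagate the slot set of a $\mathit{baseref}+\mathit{offset}$ operand make precise what the paper's looser induction hypothesis and its silent ``otherwise'' case leave implicit, but the underlying argument is the same.
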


\begin{proof}
  We can reason by induction on the value of $n$, the length of the
  traces $t_0 \rightarrow^* t_n$.
  
  \noindent {\bf Base Case:} If $n=0$ then $slots\_accessed(\_,\_)$ is empty and
  the theorem trivially holds.
  
  \medskip
  
  \noindent {\bf Induction Hypothesis:}
  For any trace $t_0 \rightarrow^m t_m$ with
  $inst(t_m) \equiv pp{:}I$, given the solution of the equations
  system using $\tau$ as transfer function, we have that $\eq_{pp}$
  correctly keeps track of all potential abstract slots

  \noindent {\bf Inductive Case:} Let us consider traces $t_0
\rightarrow \cdots \rightarrow t_n \rightarrow t_{n+1}$ of length $n+1
> 0$.
To extend the theorem to traces of length $n+1$ we reason for all
possible cases of the transfer function $\tau$ of
Def.~\ref{def:transfer-accesses} as follows.

  \begin{itemize}
  \item[(1)] {\bf $inst(t_{n+1})\equiv pp{:}\mcode{MLOAD 0x40}$}.\\
    Given the soundness of Theorem~\ref{th:slots-analysis}, the
    abstract slots obtained from $\getslot(pp)$ are a sound
    over-approximation of the concrete slots.

  \item[(2)] {\bf $inst(t_{n+1})\equiv pp{:}\mcode{MLOAD}$}.\\

    In the concrete trace, a memory access might read two potential
    types of values:
    \begin{itemize}
    \item A value that corresponds to a baseref of a slot. The
      transfer function updates the top of the abstract state as
      follows:
      $\pi[t \mapsto \{m ~|~ s \in \pi(t) \wedge m {\in} \pi(s)\}]$,
      which keeps in the top of the stack $t$ all slots $m$ reachable
      from all slots available at the top of the stack before
      executing the \code{MLOAD} instruction.
      \item A generic value: the operand is simply popped from the stack.
    \end{itemize}

  \item[(3)] {\bf $inst(t_{n+1})\equiv pp{:}\mcode{MSTORE}$}.\\
    In the concrete trace, a memory access might write two potential
    types values:
    \begin{itemize}
    \item A value that corresponds to a baseref of a slot. In this
      case, the transfer function updates the abstract state as
      follows:
      $\pi [s \mapsto \pi(s)\cup \pi(t{-}1)]\backslash \{t,t{-}1\}$ $~
      \forall s {\in} \pi(t)$, which updates the entries of all slots
      available in the top of the stack with all baserefs available in
      the top-1 element of the stack
    \item A generic value: the transfer function just pop the two top
      elements of the stack.
    \end{itemize}

  \item[(4)] {\bf $inst(t_{n+1})\equiv pp{:}\mcode{SWAPi}$}.\\
    It trivially holds as it just interchange two entries of the
    abstract state.
  
  \item[(5)] {\bf $inst(t_{n+1})\equiv pp{:}\mcode{DUPi}$}.\\
    It trivially holds as it just copy two entries of the abstract
    state.

  \item[(6)] otherwise.  \\
    It trivially holds as it only removes those entries popped from
    the stack by the corresponding instruction.

  % \item[(6)] {\bf $inst(t_{n+1})\equiv pp{:}\mcode{JUMP}$ or }.\\
  % The construction and the solution of the equations system defined at
  % Definition~\ref{def:equation-system} guarantees that any jumping instructions
  % joins the two abstract states until a fixpoint is reached. As the operation is
  % performed for all entries in the abstract states, all stack positions
  % guarantee the inclusion of all abstract slots received as entry in a block.
\end{itemize}  

\end{proof}

Given the soundness of the two previous theorems, we directly prove
the soundness of the last corollary.

\begin{corollary}[soundness of needless write accesses inference]
  If $pp \in \writingleaks$, there is no execution of the program in
  which the slot accessed at $pp$ is read after executing the write
  instruction at $pp$.
\end{corollary}

\begin{proof}
  Straightforward given the soundness of the CFG and the soundness of
  Theorem~\ref{th:slot-accesses}.
\end{proof}

\end{document}